\title{Unifying type systems for mobile processes}
\author{Emmanuel Beffara}
\date{April 20, 2015}
\begin{document}
\maketitle

\begin{abstract}
  We present a unifying framework for type systems for process calculi.
  The core of the system provides an accurate correspondence between
  essentially functional processes and linear logic proofs; fragments of this
  system correspond to previously known connections between proofs and
  processes.
  We show how the addition of extra logical axioms can widen the class of
  typeable processes in exchange for the loss of some computational properties
  like lock-freeness or termination, allowing us to see various well studied
  systems (like i/o types, linearity, control) as instances of a general
  pattern.
  This suggests unified methods for extending existing type systems with new
  features while staying in a well structured environment and constitutes a
  step towards the study of denotational semantics of processes using
  proof-theoretical methods.
\end{abstract}

\section{Introduction}

Process calculi are a wide range of formalisms designed to model concurrent
systems and reason about them by means of term rewriting.
Their applications are diverse, from the semantics of proof systems to the
conception of concrete programming languages.
Type systems for such calculi are therefore a wide domain, with systems of
different kinds designed to capture different behaviours and ensure different
properties of processes: basic interfacing,
input-output discipline \cite{pierce-1993-typing},
linearity \cite{kobayashi-1999-linearity},
lock-freeness \cite{kobayashi-2002-type},
termination \cite{deng-2004-ensuring},
respect of communication protocols \cite{honda-1993-types,honda-2008-multiparty},
functional or sequential
behaviour \cite{berger-2001-sequentiality,honda-2004-control,yoshida-2001-strong}.

In order to better understand the diversity of calculi and uncover basic
structures and general patterns, many authors have searched for languages
with simpler or more general theory in which the most features could be
expressed by means of restrictions or codings:
asynchrony \cite{boudol-1992-asynchrony},
internal mobility \cite{sangiorgi-1996-calculus},
name fusions \cite{fu-1997-proof,parrow-1998-fusion},
solos \cite{laneve-1999-solos}, etc.
It is natural to search for similar unification in the realm of type systems,
and the aim of this paper is to make a step towards this long-term objective.
Our ideal system would be simple enough so that general properties could be
reasonably easy to obtain and expressive enough so that most interesting type
systems could naturally be expressed in it in a structured way.

For this purpose, we will take inspiration and tools in proof theory.
A useful analogue is the famous and fruitful Curry-Howard correspondence: at
the core is the simply typed λ-calculus, which matches minimal intuitionistic
logic and ensures strong normalisation.
The type language can be extended for expressiveness (with quantifiers,
dependent types, polymorphism, etc.), classical logic can be embedded in it by
CPS translation or by adding logical rules.
Furthermore, extending it with a simple type equation $D=D→D$ yields the full
untyped calculus where normalisation is lost, but the identification of this
equation leads to the definition of abstract structures that are useful for
denotational semantics.

We claim that the analogue of simple types for process calculi is to be found
in linear logic \cite{girard-1987-linear}, and we propose a new implementation
of this idea.
Of course, term assignment systems for linear logic proofs have been proposed
in the past by various
authors \cite{bellin-1994-calculus,beffara-2006-concurrent,caires-2010-session}
but no such system has yet appeared as a satisfactory type system for
processes (because of too much constraint on the syntactic structure of terms),
with the notable exception of Honda and Laurent's
result \cite{honda-2010-exact} which precisely matches a proper type system
for the π-calculus with a meaningful class of  proof nets.

The novelty of our approach is to distinguish two aspects in typing: firstly
we have a typing rule for each syntactic construct independently, secondly we
have a subtyping relation that implements logical reasoning without affecting
the structure of terms; this subtyping is nothing else than entailment in
linear logic (actually a reasonable fragment of it), which allows to use all
existing theory for reasoning about it.
In this method, we insist on treating seriously the fundamental structures of
both the process calculus and the logic, the fundamental example being that
typing is preserved both by structural congruence on processes and by logical
isomorphism between types (and these are closely related).
This is necessary for developing logic-based semantics of processes in future
works, using existing tools and methods from the semantics of proofs and of
processes.

This paper is divided in two parts.
In section \ref{sec:basic} we define our basic type system for the polyadic
π-calculus and we discuss variations around the same principles for
alternative calculi.
In section \ref{sec:existing} we review several type systems and term
assignment systems and show how they fit in our framework, by means of extra
logical axioms and syntactic restrictions.
Section \ref{sec:discussion} concludes by discussing shortcomings, extensions
and ideas for future work.

\section{The basic typed calculus}
\label{sec:basic}

\subsection{Syntax}

Processes are terms of the standard polyadic π-calculus with input-guarded
replication (and no sum in the present paper), with type annotations on name
creation.
In our type system, we will derive judgements of the form $E⊢P$ where $E$ is
an environment type and $P$ is a process term.
Such a judgement is to be understood as “$P$ is well-formed under the contract
of the environment $E$”.
Environment types are made of capability assignments of the shape $x:T$ where
$x$ is a channel name and $T$ a capability type, combined using logical
connectives.
Capability types consist of an input or output capability (written $↓$ and $↑$
respectively) together with a behaviour type for the data that is
communicated, and behaviour types are capability types combined using logical
connectives.

\begin{definition}[typed terms]
  The grammar of types and processes is defined in table \ref{table:syntax}.
\end{definition}

\begin{table}
  \begin{syntax}
  \define[Capabilities:] T, U
    \case ↓A \comment{input}
    \case ↑A \comment{output}
  \define[Behaviours:] A, B
    \case T \altcase \oc T \altcase \wn T
      \comment{capability (linear, replicable, multiple)}
    \case A ⊗ B \altcase A ⅋ B
      \comment{concatenation (independent, correlated)}
    \case 1 \altcase ⊥
      \comment{empty tuple (neutral for $⊗$ or $⅋$)}
  \define[Environments:] E, F
    \case x:T \altcase \oc x:T \altcase \wn x:T
      \comment{capability assignment}
    \case E ⊗ F \altcase E ⅋ F
      \comment{union (independent or correlated)}
    \case 1 \altcase ⊥
      \comment{empty environment (neutral for $⊗$ or $⅋$)}
  \define[Processes:] P, Q
    \case \inb{u}{¤x}.P \comment{input prefix}
    \case \inr{u}{¤x}.P \comment{replicated input prefix}
    \case \outf{u}{¤v}.P \comment{output prefix}
    \case \nop \comment{inactive process}
    \case P \para Q \comment{parallel composition}
    \case \newk{x}{A}{k}P \comment{name creation, with $k∈\{1,ω\}$}
  \end{syntax}
  \caption{The syntax of types and process terms}
  \label{table:syntax}
\end{table}

Remark that we do \emph{not} force each channel name to occur only
once in an environment type, and this is a fundamental feature of our system.
It notably allows name substitution $E[¤u/¤x]$ to make sense even when it
equalises some names.

The name creation operator $\new{x}$ is annotated with a type $A$ and a kind
$k$ that distinguishes between linear and non-linear channels.
Contrary to usual practice, the type $A$ is not the type that $x$ itself will
have, but the type of the data that $x$ will transport.
In statements where the kind and type of a channel are unimportant, we use the
standard notation $\new{x}$.

The logical connectives used in environments and behaviours are those of 
multiplicative-exponential linear logic.
This logic, recalled in table \ref{table:mell}, is used to reason about
behaviours of processes.
The key ingredient is that logical consequence is interpreted as
subtyping: if $E$ and $F$ are environments such that $E$ entails $F$, then
a process that respects $F$ will respect $E$.

\begin{definition}[subtyping]
  The subtyping preorder $≤$ over environments is such that $E≤F$ holds when
  $⊢E^⊥,F$ is provable in MELL using capability assignments as atomic
  formulas.
  The associated equivalence is written $≃$.
\end{definition}

\begin{table}
  Formulas, assuming a given set of atoms written $α,β…$:
  \begin{syntax}
  \define A, B
    \case α
    \altcase α^⊥
    \altcase A ⊗ B
    \altcase A ⅋ B
    \altcase 1
    \altcase ⊥
    \altcase \oc A
    \altcase \wn A
  \end{syntax}
  Linear negation $(⋅)^⊥$ is the involution over formulas such that:
  \begin{align*}
    (α^⊥)^⊥ &= α &
    (A ⊗ B)^⊥ &= A^⊥ ⅋ B^⊥ &
    1^⊥ &= ⊥ &
    (\oc A)^⊥ &= \wn A^⊥
  \end{align*}
  Sequents are finite multisets of formulas, they are proved using the
  following rules:
  \begin{center}
    \begin{prooftree}
      \Infer0[ax]{ ⊢ A^⊥, A }
    \end{prooftree}
  \hfil
    \begin{prooftree}
      \Hypo{ ⊢ Γ, A }
      \Hypo{ ⊢ A^⊥, Δ }
      \Infer2[cut]{ ⊢ Γ, Δ }
    \end{prooftree}
  \hfil
    \begin{prooftree}
      \Hypo{ ⊢ Γ, A }
      \Hypo{ ⊢ Δ, B }
      \Infer2[$⊗$]{ ⊢ Γ, Δ, A⊗B }
    \end{prooftree}
  \hfil
    \begin{prooftree}
      \Hypo{ &⊢ Γ, A, B }
      \Infer1[$⅋$]{ &⊢ Γ, A⅋B }
    \end{prooftree}
  \\[1ex]
    \begin{prooftree}
      \Infer0[$1$]{ ⊢ 1 }
    \end{prooftree}
  \hfil
    \begin{prooftree}
      \Hypo{ &⊢ Γ }
      \Infer1[$⊥$]{ &⊢ Γ, ⊥ }
    \end{prooftree}
  \hfil
    \begin{prooftree}
      \Hypo{ &⊢ Γ, A }
      \Infer1[$\wn$]{ &⊢ Γ, \wn A }
    \end{prooftree}
  \hfil
    \begin{prooftree}
      \Hypo{ &⊢ Γ }
      \Infer1[w]{ &⊢ Γ, \wn A }
    \end{prooftree}
  \hfil
    \begin{prooftree}
      \Hypo{ &⊢ Γ, \wn A, \wn A }
      \Infer1[c]{ &⊢ Γ, \wn A }
    \end{prooftree}
  \hfil
    \begin{prooftree}
      \Hypo{ &⊢ \wn A_1, …, \wn A_n, B }
      \Infer1[$\oc$]{ &⊢ \wn A_1, …, \wn A_n, \oc B }
    \end{prooftree}
  \end{center}
  \smallskip
  \caption{Multiplicative-exponential linear logic (MELL)}
  \label{table:mell}
\end{table}

Formulas of MELL with capability assignments as atoms and where modalities
$\wn$ and $\oc$ are only applied to literals (atoms and atom negations) will
be called \emph{environment formulas}, they will be useful for reasoning about
typed processes.
Environment types correspond to such formulas with only positive atoms, {i.e.}
without negation.

\begin{definition}[typing judgement]
  Typing judgements have the shape $E⊢P$ where $E$ is an environment type and
  $P$ is a process term.
  They are derived using the rules of table \ref{table:typing}.

  The notation $¤x:A$ where $A$ is a behaviour type stands for the
  environment type obtained by annotating each capability in $A$ by a name
  in the sequence $¤x$, respecting the left-to-right order, assuming that
  the length of $¤x$ matches the number of capabilities in $A$.
  For instance, $xyz:(T⅋1)⊗\wn U⊗⊥⊗V$ stands for $((x:T)⅋1)⊗\wn(y:U)⊗⊥⊗(z:V)$.
\end{definition}

\begin{table}
  \begin{center}
    \begin{prooftree}
      \Infer0[nop]{ ⊥ ⊢ \nop }
    \end{prooftree}
  \hfil
    \begin{prooftree}
      \Hypo{ E ⊢ P }
      \Hypo{ F ⊢ Q }
      \Infer2[para]{ E ⅋ F ⊢ P\para Q }
    \end{prooftree}
  \hfil
    \begin{prooftree}
      \Hypo{ E ≤ F }
      \Hypo{ F ⊢ P }
      \Infer2[sub]{ E ⊢ P }
    \end{prooftree}
  \\[1ex]
    \begin{prooftree}
      \Hypo{ ¤x:A ⊗ E ⊢ P }
      \Infer1[in]{ u:↓A ⊗ E ⊢ \inb{u}{¤x}.P }
    \end{prooftree}
  \hfil
    \begin{prooftree}
      \Hypo{ ¤x:A ⊗ E^! ⊢ P }
      \Infer1[in!]{
        \wn u:↓A ⊗ E^! ⊢ \inr{u}{¤x}.P }
    \end{prooftree}
  \hfil
    \begin{prooftree}
      \Hypo{ E ⊢ P }
      \Infer1[out]{ u:↑A ⊗ (¤v:A ⅋ E) ⊢ \outf{u}{¤v}.P }
    \end{prooftree}
  \\[1ex]
    \begin{prooftree}
      \Hypo{ (x:↑A ⅋ x:↓A) ⊗ E ⊢ P }
      \Infer1[new1]{ E ⊢ \newl{x}{A}P }
    \end{prooftree}
  \hfil
    \begin{prooftree}
      \Hypo{ (\oc x:↑A ⅋ \wn x:↓A) ⊗ E ⊢ P }
      \Infer1[newω]{ E ⊢ \newr{x}{A}P }
    \end{prooftree}
  \end{center}
  \begin{itemize}
  \item In the \rulename{new} rules, the name $x$ must not
    occur in the environment $E$.
  \item In the \rulename{in} rules, the names in $¤x$ must not occur in the
    environments $E$ and $E^!$.
  \item In \rulename{in!}, $E^!$ stands for an environment of the shape
    $\oc y_1:T_1 ⊗ … ⊗ \oc y_n:T_n$.
  \end{itemize}
  \caption{Typing rules for processes}
  \label{table:typing}
\end{table}

Note that process terms have no type, in other words there is a unique type
for processes which means “well-formed”; it is also the case for instance in
i/o types with linearity, as studied in section \ref{sec:kpt}.
Of course, it would be strictly equivalent to consider that, in $E⊢P$, the
formula $E^⊥$ is the type of $P$: this is what usually happens in systems more
oriented towards logic, like those studied in sections \ref{sec:hyb}
and \ref{sec:session}.

Remark that input and output capabilities are logically not dual, in the sense
of being a negation of each other: $(u:↓A)^⊥$ and $u:↑A$ are just distinct
literals.
Actual duality between input and output is established by the typing rules for
name creation, for instance \rulename{new1} corresponds to setting the formula
$x:↑A⅋x:↓A$ as an axiom, which does represent the creation of a name $x$ with
one occurrence of each capability, where capabilities are dual.

In the statements and proofs, the types for channels in premisses of
\rulename{new} rules is used in many places.
For readability and conciseness, we introduce the following notations:
\begin{align*}
  [x]^1_A &:= x:↑A ⅋ x:↓A, &
  [x]^ω_A &:= \oc x:↑A ⅋ \wn x:↓A.
\end{align*}
In $[x]^k_A$, we may keep $k$ or $A$ implicit when the details are
unimportant.
This way, the rules \rulename{new1} and \rulename{newω} are simplified into a
single form:
\[
  \begin{prooftree}
    \Hypo{ [x]^k_A ⊗ E &⊢ P }
    \Infer1[new$k$]{ E &⊢ \newk{x}{A}{k}P }
  \end{prooftree}
  \qquad\text{where $x$ does not occur in $E$.}
\]
Moreover, in applications of the \rulename{sub} rule, we will keep the premiss $E≤F$
implicit since $E$ and $F$ are the environments of the conclusion and premiss
respectively.

\begin{lemma}\label{lemma:subtyping-commutation}
  A judgement $E⊢P$ holds if and only if it has a derivation where
  \rulename{sub} rules appear only right above \rulename{in} and
  \rulename{new} rules and at the root of the proof.
\end{lemma}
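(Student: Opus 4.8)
The plan is to prove this by a permutation-of-rules argument, showing that \rulename{sub} rules can be pushed upward through the other typing rules until they either collide with an \rulename{in} or \rulename{new} rule (where they must stop) or reach the root. The key observation is that subtyping $\leq$ is a preorder, so consecutive \rulename{sub} rules can always be merged into one; hence it suffices to show that a \rulename{sub} rule sitting immediately above any of \rulename{nop}, \rulename{para}, \rulename{in}, \rulename{in!}, \rulename{out}, or \rulename{new$k$} can be moved above that rule's premisses (possibly duplicated), or absorbed, without changing the conclusion.

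First I would treat the structural rules. For \rulename{para}, suppose we have $E \leq E' \mathbin{⅋} F'$ with $E' \vdash P$ and $F' \vdash Q$; I want to replace this by applying \rulename{para} to $P$ and $Q$ first and then a single \rulename{sub}. That is immediate: $E' \vdash P$ and $F' \vdash Q$ give $E' \mathbin{⅋} F' \vdash P \mathbin{\para} Q$ by \rulename{para}, and then \rulename{sub} with $E \leq E' \mathbin{⅋} F'$ yields $E \vdash P \mathbin{\para} Q$. The \rulename{nop} case needs nothing (a \rulename{sub} below \rulename{nop} is already allowed if we regard the root position, but in fact it just produces some $E \leq \bot$; such a \rulename{sub} is at the leaf and can be left there, or we note it is harmless). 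For \rulename{out}: a \rulename{sub} above it changes the premiss's environment, but since \rulename{out} has the shape $E \vdash P$ over $u{:}{↑}A \otimes (¤v{:}A \mathbin{⅋} E) \vdash \outf{u}{¤v}.P$, monotonicity of the environment constructors under $\leq$ (which follows from MELL admitting the context rules for $\otimes$ and $⅋$) lets the \rulename{sub} commute down past \rulename{out}.

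The heart of the matter is showing that \rulename{sub} rules genuinely commute with \rulename{para} and \rulename{out} in the \emph{downward} direction as well — i.e. that whenever we have $E \leq F$ and the last rule producing $F \vdash P$ is one of these, we can reorganize — and crucially that \rulename{sub} above \rulename{in}, \rulename{in!}, \rulename{new$k$} \emph{cannot} in general be eliminated, which is why the statement only claims they appear right above those rules. For the commutations I would rely on: (i) $\leq$ is transitive, so stacked \rulename{sub}s merge; (ii) the environment formers are monotone for $\leq$, a direct consequence of the $⊗$, $⅋$, and functoriality of $\oc/\wn$ rules in MELL, so a \rulename{sub} can always be "pushed into" a subterm of an environment; (iii) for \rulename{para} specifically, if $E \leq E'_1 \mathbin{⅋} E'_2$ there is in general no way to split $E$ as $E_1 \mathbin{⅋} E_2$ with $E_i \leq E'_i$, so we genuinely must leave the merged \rulename{sub} at that node — but that is fine, because after merging all \rulename{sub}s the only ones that remain are exactly those guarding \rulename{in}/\rulename{new} (which cannot absorb their subtyping because the environment shape $u{:}{↓}A \otimes E$ or $[x]^k_A \otimes E$ is rigid and the premiss mentions bound names) plus possibly one at the root.

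The main obstacle I anticipate is bookkeeping the induction cleanly: the right statement to induct on is something like "every derivation is equivalent to one in \emph{sub-normal form}", and the inductive step, when the last rule is \rulename{para}, must handle a \rulename{sub} appearing above \emph{each} premiss by first normalizing the subderivations and then checking that at most a single \rulename{sub} survives just below the \rulename{para} node, which then gets merged upward by the outer induction — so some care is needed to make sure the recursion is well-founded (induct on the size of the derivation, counting \rulename{sub} rules with appropriate weight, or argue by a measure that strictly decreases when two \rulename{sub}s merge or a \rulename{sub} is pushed past a non-\rulename{in}/\rulename{new} rule). I expect no deep difficulty beyond this: the monotonicity facts about $\leq$ are standard MELL, and the rigidity of the \rulename{in}/\rulename{new} premisses is exactly what makes the normal form both achievable and non-trivial.
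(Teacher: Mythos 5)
Your proposal is correct and follows essentially the same route as the paper's proof: merge consecutive \rulename{sub} rules by transitivity of $≤$ (the cut rule of MELL), then commute each remaining \rulename{sub} downward past \rulename{nop}, \rulename{para} and \rulename{out} using monotonicity of $⊗$ and $⅋$ under entailment, stopping only at \rulename{in}, \rulename{in!} and \rulename{new} because their premisses impose a rigid shape on the environment. The only caveat is that your opening sentences speak of pushing \rulename{sub} \emph{upward}, which would fail at \rulename{para} exactly as your own point (iii) observes; but the concrete commutations you actually carry out go downward, which is what the stated normal form requires, so the argument as executed coincides with the paper's.
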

\begin{proof}
  Firstly, it is clear that successive uses of the \rulename{sub} rule can
  always be gathered into one thanks to the cut rule of MELL.
  We may thus assume without loss of generality that no \rulename{sub} rule
  occurs above another \rulename{sub} rule.
  Then one easily checks by case analysis on the proofs that each
  \rulename{sub} rule can be commuted down with any rule except \rulename{new}
  and \rulename{in} rules because these impose constraints on the context of
  their premiss.
\end{proof}

This lemma allows us to consider a restricted form of derivation when
reasoning on typed processes.
In order to establish subject reduction in the next section, we will also need
the following general properties of MELL proofs:

\begin{lemma}[substitutivity]\label{lemma:mell-subst}
  If $⊢Γ$ is a provable sequent in MELL, then for all propositional variable
  $α$ and formula $A$ the sequent $⊢Γ[A/α]$ is also provable.
\end{lemma}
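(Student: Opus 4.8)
The plan is to prove this by induction on the structure of the given proof of $⊢Γ$ in MELL, showing that applying the substitution $[A/α]$ transforms it into a valid proof of $⊢Γ[A/α]$. Since substitution is a purely syntactic operation that commutes with all the formula constructors and with linear negation — indeed $(B^⊥)[A/α] = (B[A/α])^⊥$ holds by an immediate sub-induction on $B$, using that $(α^⊥)[A/α] = A^⊥ = (α[A/α])^⊥$ — each rule of the calculus will be preserved under substitution.

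Concretely, I would proceed case by case on the last rule of the derivation. For the \rulename{ax} rule $⊢B^⊥,B$, after substitution we get $⊢B^⊥[A/α], B[A/α]$, which equals $⊢(B[A/α])^⊥, B[A/α]$ by the commutation of substitution with negation, hence is again an instance of \rulename{ax}. For \rulename{cut}, the cut formula $B$ becomes $B[A/α]$ and its dual $B^⊥$ becomes $(B[A/α])^⊥$, so the rule still applies to the substituted premisses given by the induction hypothesis. The multiplicative rules $⊗$, $⅋$, $1$, $⊥$ and the structural rules \textbf{w}, \textbf{c}, $\wn$ are immediate since substitution distributes over $⊗$, $⅋$ and the modalities and leaves $1,⊥$ untouched. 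The only case that requires a moment's attention is the promotion rule $\oc$: its premiss has the shape $⊢\wn B_1,…,\wn B_n, B'$ and the side condition is that all formulas but the last are $\wn$-formulas; after substitution the premiss becomes $⊢\wn(B_1[A/α]),…,\wn(B_n[A/α]), B'[A/α]$, which still has $n$ leading $\wn$-formulas, so the promotion rule applies unchanged to yield $⊢\wn(B_1[A/α]),…, \oc(B'[A/α])$.

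There is no real obstacle here: the statement is a routine structural fact about substitution in sequent calculus, and the mild care needed is only to (i) establish the negation-substitution commutation lemma first, and (ii) check that the context shape required by the promotion rule is stable under substitution, which it is because substitution does not erase or create top-level $\wn$ modalities.
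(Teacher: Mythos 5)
Your proof is correct and follows exactly the route the paper indicates (a routine structural induction over MELL proofs, using that substitution commutes with linear negation and with all connectives, and that the $\wn$-context shape required by promotion is preserved). Nothing to add.
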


\begin{lemma}[interpolation]\label{lemma:interpolation}
  Let $Γ$ and $Δ$ be two multisets of MELL formulas.
  If $⊢Γ,Δ$ is provable, then there exists a formula $F$ that contains only
  literals present in both $Γ^⊥$ and $Δ$ such that the sequents $⊢Γ,F$ and
  $⊢F^⊥,Δ$ are provable.
\end{lemma}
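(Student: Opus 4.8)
The plan is to prove interpolation by structural induction on a derivation of $⊢Γ,Δ$, after one preliminary reduction. First I would invoke cut-elimination for MELL so that we may assume the given derivation $π$ is cut-free. This reduction is not cosmetic: a direct treatment of the \rulename{cut} rule would combine interpolants of the two premisses into $F_1⅋F_2$ (or dually), and the literals of $F_1$ may include literals of the cut formula, which occur in neither $Γ^⊥$ nor $Γ$ of the conclusion — so the requirement that $F$ mention only literals common to $Γ^⊥$ and $Δ$ would be violated. On a cut-free derivation this cannot happen, since every formula appearing in $π$ is a subformula of the conclusion.

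Then I would induct on $π$, carrying along the fixed partition of the end sequent into $Γ$ and $Δ$. In every case the principal formula(s) of the last rule lie in exactly one of the two parts (and a \rulename{w} or \rulename{c} rule acts entirely within one part), so the partition propagates to the premisses in the evident way — for the $⊗$ rule, the split of the context is inherited by both parts — and the interpolant is assembled from those obtained by induction. For \rulename{ax} on $⊢A^⊥,A$: if the two formulas lie on opposite sides take $F=A$; if both lie in $Γ$ take $F=⊥$; if both lie in $Δ$ take $F=1$. For the $1$ rule, take $F=⊥$ or $F=1$ according to which side carries the $1$. For a unary multiplicative or structural rule ($⅋$, $⊥$, dereliction, \rulename{w}, \rulename{c}) one keeps the premiss interpolant and reapplies the same rule on the appropriate side; the literal condition survives because the formula newly appearing in the conclusion (such as $A⅋B$ over $A,B$, or $\wn A^⊥$ over $A^⊥$) has exactly the literals of its immediate subformulas, so the admissible-literal set changes compatibly. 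For the $⊗$ rule with principal $A⊗B$ in the $Γ$ part and premiss interpolants $F_1,F_2$, set $F=F_1⅋F_2$: then $⊢Γ,F$ follows by the $⊗$ then $⅋$ rules and $⊢F^⊥,Δ$ by the $⊗$ rule, while a literal of $F_1$ lying in $A^⊥$ is a literal of $(A⊗B)^⊥∈Γ^⊥$, hence still admissible (symmetrically $F=F_1⊗F_2$ if $A⊗B$ is in the $Δ$ part).

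The delicate case — and the one I expect to be the main obstacle — is the promotion rule, where the interpolant must be given a modal shape that lets both sides be re-derived. Consider a last rule $\oc$ with conclusion $⊢\wn Φ_1,\wn Φ_2,\oc B$, where $\wn Φ_1,\wn Φ_2$ are the multisets of $\wn$-formulas falling in the $Γ$ part and the $Δ$ part respectively, and suppose $\oc B$ lies in the $Δ$ part. Applying the induction to the premiss $⊢\wn Φ_1,\wn Φ_2,B$ with $B$ placed in the $Δ$ part yields $F'$ with $⊢\wn Φ_1,F'$ and $⊢F'^⊥,\wn Φ_2,B$; I then set $F=\oc F'$. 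The key observation is that all side formulas of a promotion are $\wn$-formulas: thus $⊢\wn Φ_1,\oc F'$ follows by applying the $\oc$ rule to $F'$, and $⊢\wn F'^⊥,\wn Φ_2,\oc B$ follows from $⊢F'^⊥,\wn Φ_2,B$ by dereliction on $F'^⊥$ followed by the $\oc$ rule on $B$ (whose context $\wn F'^⊥,\wn Φ_2$ is now all modal). The symmetric case, $\oc B$ in the $Γ$ part, is handled the same way with $F=\wn F'$. Since $\oc F'$ and $\wn F'$ carry the same literals as $F'$, and $B$ and $\oc B$ carry the same literals, the literal condition is unaffected. Collecting the cases completes the induction, and combining with the initial appeal to cut-elimination gives the statement for arbitrary provable $⊢Γ,Δ$.
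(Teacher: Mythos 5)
Your proposal is correct and follows essentially the same route as the paper's own proof in the appendix: induction on a cut-free derivation, with the same interpolant assignments in each case (in particular $F_1\otimes F_2$ versus $F_1⅋F_2$ for the tensor rule according to which part holds the principal formula, and $\oc F'$ or $\wn F'$ for promotion, justified exactly by the observation that the promotion context is entirely modal). The only difference is presentational: you explicitly motivate the appeal to cut-elimination, which the paper takes for granted.
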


Both lemmas are easily proved by structural induction over proofs (a detailed
proof for lemma \ref{lemma:interpolation} can be found in
appendix \ref{app:interpolation}).
They actually hold for full linear logic but we state them in MELL because it
is the fragment we use in this paper.

\subsection{Execution}

Our presentation of execution uses structural congruence and reduction,
because it provides simpler statements than a presentation using a labelled
transition system.

\begin{definition}[structural congruence]
  The congruence $≡$ over process terms is defined by abelian monoid laws for
  parallel composition and the standard scoping rules:
  \begin{gather*}
    (P \para Q) \para R ≡ P \para (Q \para R) \qquad
    P \para Q ≡ Q \para P \qquad
    P \para 0 ≡ P \\
    \newk{x}{A}{k}\newk{y}{B}{ℓ}P ≡ \newk{y}{B}{ℓ}\newk{x}{A}{k}P \qquad
    P \para \newk{x}{A}{k}Q ≡ \newk{x}{A}{k}(P \para Q)
  \end{gather*}
  where $x≠y$ and $x$ does not occur free in $P$ in the last rule.
\end{definition}

\begin{lemma}\label{lemma:congruence}
  Typing is preserved by structural congruence.
\end{lemma}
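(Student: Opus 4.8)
The plan is to argue by induction on the derivation of $P≡Q$, that is, on the rules generating the congruence, showing in each case that $P$ and $Q$ admit exactly the same typings. Reflexivity is trivial, symmetry and transitivity follow immediately from the induction hypotheses, and the context-closure rules (compatibility of $≡$ with input and output prefixes, with parallel composition and with name creation) reduce to the induction hypothesis as soon as one has \emph{inversion lemmas} describing how each syntactic construct can be typed. These follow directly from Lemma~\ref{lemma:subtyping-commutation}: in a canonical derivation $P\para Q$ can only be typed by \rulename{para} below an optional \rulename{sub} at the root, so $E⊢P\para Q$ holds iff $E≤F⅋G$ for some $F$ and $G$ with $F⊢P$ and $G⊢Q$; likewise $E⊢\newk{x}{A}{k}P$ holds iff $[x]^k_A⊗E⊢P$ (choosing $x$ not occurring in $E$ by $α$-conversion), and $E⊢\inb{u}{¤x}.P$ holds iff $E≤u:↓A⊗E'$ for some $A$ and $E'$ with $¤x:A⊗E'⊢P$ and $¤x$ not in $E'$, and similarly for the replicated input and the output. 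Throughout I will use freely that $≤$ is a precongruence for $⊗$, $⅋$ and the exponentials, and that $≃$ contains all multiplicative-exponential isomorphisms of MELL; both are standard.

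With the inversion lemmas at hand, the abelian-monoid laws for parallel composition are routine: each side's typing amounts to a choice of splitting $E≤F⅋(\cdots)$ of the environment, and the isomorphisms $(F⅋G)⅋H≃F⅋(G⅋H)$, $F⅋G≃G⅋F$ and $F⅋⊥≃F$ — the last one combined with $⊥⊢\nop$ for the neutrality law — transport a typing of one side into a typing of the other by a single use of \rulename{sub}. The commutation of two name creations is similar: two applications of the \rulename{new} inversion lemma reduce either side to $[x]^k_A⊗[y]^ℓ_B⊗E⊢P$ up to associativity and commutativity of $⊗$, and the side conditions ($x≠y$, and neither name occurring in $E$) are visibly symmetric in the two names.

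The genuinely delicate case, which I expect to be the main obstacle, is scope extrusion, $P\para\newk{x}{A}{k}Q ≡ \newk{x}{A}{k}(P\para Q)$ with $x$ not free in $P$, because the environments of the two sides are related not by an isomorphism but by a strict entailment. Choosing the bound name $x$ fresh (so that it does not occur in $E$) and unfolding the inversion lemmas, a typing of the left-hand side is the data of $F⊢P$, $[x]^k_A⊗G⊢Q$ with $x$ not in $G$, and $E≤F⅋G$, whereas a typing of the right-hand side is the data of $F⊢P$, $G⊢Q$ and $[x]^k_A⊗E≤F⅋G$. From the left-hand form to the right-hand one, one applies \rulename{para} to $F⊢P$ and $[x]^k_A⊗G⊢Q$, uses the provable (one-directional) distributivity entailment $[x]^k_A⊗(F⅋G)≤F⅋([x]^k_A⊗G)$ through \rulename{sub}, so that \rulename{new$k$} applies (legally, since $x$ occurs neither in $F$ nor in $G$) and a final \rulename{sub} closes the case. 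The converse direction is the crux. The key structural observation is that a provable entailment $E≤H$ with $x$ not free in $E$ cannot have any positive occurrence of an $x$-capability in $H$; combined with the fact that a bare $x$-capability cannot sit unused in the environment of a process that does not mention $x$, this forces $x$ to occur in neither $F$ nor $E$ in the data above, so that in $[x]^k_A⊗E≤F⅋G$ all occurrences of $x$ live in $[x]^k_A$ on the left and in $G$ on the right. One then has to extract from $G$ an environment type $G'$ with $x$ not in $G'$, $[x]^k_A⊗G'⊢Q$ and $E≤F⅋G'$, from which $E⊢P\para\newk{x}{A}{k}Q$ follows by \rulename{new$k$}, \rulename{para} and \rulename{sub}. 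Carrying out this extraction — tracing the occurrences of $x$ through a cut-free proof of the entailment, or applying the interpolation lemma (Lemma~\ref{lemma:interpolation}) with the $x$-interface as the cut, and checking that the resulting interpolant can be taken to be a genuine environment type — is the technical heart of the argument.
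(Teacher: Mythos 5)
Your proof follows the same route as the paper's: canonical derivations via Lemma~\ref{lemma:subtyping-commutation}, multiplicative isomorphisms for the monoid laws and for the commutation of restrictions, and interpolation for the hard direction of scope extrusion. The one step you leave open --- checking that the interpolant can be taken to be a genuine environment type --- is precisely where the paper's proof does its work, but it needs no tracing of occurrences through a cut-free proof: it falls out of the literal condition already built into the statement of Lemma~\ref{lemma:interpolation}, provided you interpolate across the $P$/$Q$ boundary rather than across ``the $x$-interface''. Concretely, in your notation the subtyping witnessing $E⊢\newk{x}{A}{k}(P\para Q)$ is an MELL proof of $⊢([x]^k_A)^⊥,E^⊥,F,G$ with $F⊢P$ and $G⊢Q$; split it as $Γ=E^⊥,F$ and $Δ=([x]^k_A)^⊥,G$. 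The interpolant $G'$ then yields $E≤F⅋G'$ and $[x]^k_A⊗G'≤G$, and each of its literals occurs both in $Γ^⊥=E,F^⊥$ and in $Δ$. Since $x$ occurs neither in $E$ (side condition of \rulename{new$k$}) nor in $F$, no literal of $G'$ mentions $x$; every literal of $G'$ must therefore come from $G$ rather than from $([x]^k_A)^⊥$, hence is positive, so $G'$ is an environment type, and it is exactly the $G'$ you ask for: \rulename{sub} and \rulename{new$k$} give $G'⊢\newk{x}{A}{k}Q$, then \rulename{para} and a final \rulename{sub} close the case.

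One caveat, which you share with the paper: the claim that $x$ cannot occur in the environment $F$ of $P$ merely because $x$ is not free in $P$ is not literally forced by the rules --- a \rulename{sub} step can weaken an exponential assignment $\oc x:T$ into an environment of a term that never mentions $x$ --- so it needs either a freshness convention on the bound name with respect to the whole derivation, or a preliminary cleaning of the derivation of $F⊢P$. Modulo that shared imprecision, and once the deferred positivity check is closed as above, your argument is the paper's argument.
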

\begin{proof}
  This is proved by checking each axiom of structural congruence.
  Most cases are direct, the only technical point is the proof that if
  $\newk{x}{A}{k}(P\para Q)$ is typeable and $x$ does not occur in $P$, then
  $P\para\newk{x}{A}{k}Q$ has the same type.
  We transform a generic typing of $\newk{x}{A}{k}(P\para Q)$ into a typing of
  $P\para\newk{x}{A}{k}Q$ as follows:
  \[
    \begin{prooftree}
      \Hypo{ E ⊢ P }
      \Hypo{ F ⊢ Q }
      \Infer2[para]{ E ⅋ F ⊢ P \para Q }
      \Infer1[sub]{ [x]^k_A ⊗ G &⊢ P \para Q }
      \Infer1[new$k$]{ G &⊢ \newk{x}{A}{k}(P \para Q) }
    \end{prooftree}
    \quad\to\quad
    \begin{prooftree}
      \Hypo{ E ⊢ P }
      \Hypo{ F &⊢ Q }
      \Infer1[sub]{ [x]^k_A ⊗ H &⊢ Q }
      \Infer1[new$k$]{ H &⊢ \newk{x}{A}{k}Q }
      \Infer2[para]{ E ⅋ H ⊢ P \para \newk{x}{A}{k}Q }
      \Infer1[sub]{ G ⊢ P \para \newk{x}{A}{k}Q }
    \end{prooftree}
  \]
  In order to find the environment $H$, remark that the subtyping on the left
  is justified by an MELL proof of $⊢([x]^k_A)^⊥,G^⊥,E,F$.
  By lemma \ref{lemma:interpolation}, there is a formula $H$ such that
  $⊢G^⊥,E,H$ and $⊢H^⊥,([x]^k_A)^⊥,F$ are provable and the literals in $H$ occur
  both in $G,E^⊥$ and in $([x]^k_A)^⊥,F$.
  By hypothesis $x$ does not occur in $P$ hence not in $E$, and not in $G$
  either by the side-condition on \rulename{new$k$}, so there is no $x$ in $H$.
  Hence the literals in $H$ occur in $F$ so they are positive and $H$ is an
  environment type.
  The proofs of $⊢G^⊥,E,H$ and $⊢H^⊥,([x]^k_A)^⊥,F$ justify the \rulename{sub}
  rules on the right.
  The other cases are detailed in appendix \ref{app:congruence}.
\end{proof}

\begin{definition}[reduction]
  Reduction is the relation $\redpi{ℓ}$ where $ℓ$ is either a name or the
  symbol $τ$.
  It is generated by the rules
  \begin{align*}
    \outf{u}{¤v}.P \para \inb{u}{¤x}.Q
      &\redpi{u} P \para Q[¤v/¤x] &
    \outf{u}{¤v}.P \para \inr{u}{¤x}.Q
      &\redpi{u} P \para Q[¤v/¤x] \para \inr{u}{¤x}.Q
  \end{align*}
  extended to arbitrary contexts as
  \[
    \begin{prooftree}
      \Hypo{ P \redpi{ℓ} P' }
      \Infer1{ P\para Q \redpi{ℓ} P'\para Q }
    \end{prooftree}
    \;
    \begin{prooftree}
      \Hypo{ P \redpi{ℓ} P' }
      \Hypo{ ℓ ≠ u }
      \Infer2{ \new{u}P \redpi{ℓ} \new{u}P' }
    \end{prooftree}
    \;
    \begin{prooftree}
      \Hypo{ P \redpi{u} P' }
      \Infer1{ \newl{u}{A}P \redpi{τ} P' }
    \end{prooftree}
    \;
    \begin{prooftree}
      \Hypo{ P \redpi{u} P' }
      \Infer1{ \newr{u}{A}P \redpi{τ} \newr{u}{A}P' }
    \end{prooftree}
  \]
  and saturated under structural congruence.
\end{definition}

The only difference with standard reduction is that we delete linear name
creations as soon as their name is used.
This is consistent with the linearity requirement, moreover in typed processes
this requirement is fulfilled.
In plain π-calculus this operation would be handled by the congruence rule
$\new{x}P≡P$ if $x$ is not free in $P$, but we choose not to use this approach
here in order to avoid an extra kind of “new” operator just for this case.

\begin{theorem}[subject reduction]\label{thm:subject-reduction}
  For all typed term $Γ⊢P$ and execution step $P\redpi{τ}P'$,
  the judgement $Γ⊢P'$ is derivable.
\end{theorem}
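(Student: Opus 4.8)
The argument goes by induction on the derivation of $P\redpi{\tau}P'$. If the last rule is the parallel-context rule or the $\new{u}$-context rule, one concludes from the induction hypothesis by an application of \rulename{para} or \rulename{new$k$} and a \rulename{sub} at the root — after putting the derivation of $\Gamma⊢P$ in the normal form of lemma \ref{lemma:subtyping-commutation}, which exposes the subderivation $G⊢Q$ (resp.\ $[v]^\ell_B⊗G⊢Q$) the hypothesis applies to. If the last rule is structural saturation, lemma \ref{lemma:congruence} settles it. There remain the two rules that genuinely create a $\tau$-step: $\newl{u}{A}R\redpi{\tau}R'$ from $R\redpi{u}R'$, and $\newr{u}{A}R\redpi{\tau}\newr{u}{A}R'$ from $R\redpi{u}R'$. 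For both I would prove the auxiliary statement: \emph{if $[u]^k_A⊗E⊢R$ with $u$ not occurring in $E$ and $R\redpi{u}R'$, then $E⊢R'$ when $k=1$ and $[u]^\omega_A⊗E⊢R'$ when $k=\omega$}; combined with \rulename{new$k$} and a \rulename{sub} at the root, this yields precisely the two cases.

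The auxiliary statement is proved by a secondary induction on $R\redpi{u}R'$. First I would record a substitution lemma for typing — \emph{if $F⊢S$ then $F[\vec v/\vec x]⊢S[\vec v/\vec x]$} — obtained from lemma \ref{lemma:mell-subst}, since renaming names is a substitution of (atomic) capability assignments, the typing rules other than \rulename{sub} are stable under it, and the MELL proofs witnessing \rulename{sub} rules are transported by substitutivity. The parallel- and $\new{u}$-context closures of $\redpi{u}$ then commute with \rulename{para} and \rulename{new$k$}; in the $k=\omega$ case this requires distributing the exponential capabilities $\oc u{:}{↑}A$ and $\wn u{:}{↓}A$ between the two components, which is where the contraction and weakening rules of MELL are used. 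In the $k=1$ case one may instead short-circuit these closures with lemma \ref{lemma:congruence} and the fact that a linearly-created name carries at most one input and one output occurrence in a typeable process — read off the cut-free form of the MELL proof witnessing the \rulename{sub} below \rulename{new1}: $[u]^1_A$ supplies exactly one occurrence of each of $(u{:}{↑}A)^⊥$ and $(u{:}{↓}A)^⊥$, and these are the only duals available since $u$ does not occur in $E$ and environment types carry only positive atoms. So in the linear case the reduction may be assumed to put $\newl{u}{A}$ directly around the communicating pair.

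The crux is the base case, with typing judgement $[u]^k_A⊗E⊢\outf{u}{\vec v}.P_1\para\inb{u}{\vec x}.Q_1$ (linear case) or $[u]^\omega_A⊗E⊢\outf{u}{\vec v}.P_1\para\inr{u}{\vec x}.Q_1$ (replicated case; a parallel residue still using $u$ is carried along, as $\inr{u}{\vec x}.Q_1$ survives, and is dealt with by the $\new{u}$-context closure above). In the normal form of lemma \ref{lemma:subtyping-commutation} this derivation decomposes as: a (possibly trivial) root \rulename{sub}; a \rulename{para} rule; on the left a \rulename{out} rule, giving $E_1⊢P_1$ and output environment $u{:}{↑}A_o⊗(\vec v{:}A_o⅋E_1)$; on the right a \rulename{in} rule (resp.\ a \rulename{in!} rule), possibly preceded by a \rulename{sub}, giving $\vec x{:}A_i⊗E_2⊢Q_1$ with $\vec x$ not in $E_2$ and input environment $u{:}{↓}A_i⊗E_2$ (resp.\ $\wn u{:}{↓}A_i⊗E_2$ with $E_2$ of the shape required by \rulename{in!}). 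The root \rulename{sub} is justified by an MELL proof of $⊢([u]^k_A)^⊥$ together with $E^⊥$ and the two environments; reading it cut-free and tracing where the literals formed with $↑A$ and $↓A$ can be linked — again with $u$ absent from $E$ and environments positive — forces $A_o=A_i=A$. The substitution lemma then gives $\vec v{:}A⊗E_2⊢Q_1[\vec v/\vec x]$, and \rulename{para} with $E_1⊢P_1$ (and, in the replicated case, again with $\wn u{:}{↓}A⊗E_2⊢\inr{u}{\vec x}.Q_1$) yields a typing of $P'$ whose environment must be lowered, by a single \rulename{sub}, to $E$ (resp.\ to $[u]^\omega_A⊗E$).

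The main obstacle is precisely that last \rulename{sub}: turning the MELL proof that witnesses the \rulename{new$k$}-subtyping of the redex into one that witnesses the subtyping needed by the reduct. This is a cut-elimination-style rewriting localised to the created name $u$: in the cut-free proof, the literals $u{:}{↑}A$ and $u{:}{↓}A$ supplied by $[u]^k_A$ are linked to those demanded by the output and input prefixes, and executing these links is exactly the communication seen at the level of proofs. Since $u$ occurs nowhere in $E$ and the received names $\vec x$ are fresh, the rewriting leaves no residual occurrence of $u$ (and, for $k=1$, erases $[u]^1_A$ outright), and lemma \ref{lemma:interpolation} shows — just as in the proof of lemma \ref{lemma:congruence} — that the residue is a genuine environment type: the entailment $E≤E_1⅋(\vec v{:}A⊗E_2)$, or its analogue retaining $\oc u{:}{↑}A$ and $\wn u{:}{↓}A$, where the structural rules of MELL are what reproduce the surviving replicated input and any parallel residue on $u$. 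I expect the book-keeping of the exponential modalities in the $k=\omega$ case to be the most delicate part of this final step.
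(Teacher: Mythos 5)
Your plan is essentially sound, and its skeleton --- dispatch the context and congruence cases, normalise the derivation with lemma \ref{lemma:subtyping-commutation}, decompose the typing of the redex, re-type the continuations and close with a single \rulename{sub} --- is the paper's. Where you genuinely diverge is in the step you rightly single out as the crux: producing the MELL proof witnessing that final \rulename{sub}. You propose a cut-elimination-style surgery on the cut-free witness of the subtyping under \rulename{new$k$}, tracing the axiom links of the literals $u{:}{↑}A$ and $u{:}{↓}A$ and ``executing'' them. The paper instead gets the linear case almost for free from lemma \ref{lemma:mell-subst}: since capability assignments are atoms, one substitutes $\vec v{:}A$ for $u{:}{↓}A$ and $(\vec v{:}A)^⊥$ for $u{:}{↑}A$ throughout the witnessing proof; $[u]^1_A$ then becomes the provable $(\vec v{:}A)^⊥⅋\vec v{:}A$ and is cut away, and two short auxiliary sequents finish the composition --- no proof surgery and no interpolation, which is indeed not needed here since both endpoints of the required entailment are already environment types (the paper explicitly tolerates negated atoms in the intermediate formulas). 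Your surgery would work, and it is close to what the paper's appendix is forced to do in the replicated case, where blunt substitution fails because $(u{:}{↑}A)^⊥$ sits under a $\wn$ and the input literal must survive in the residual replicated input; you correctly flag the exponential bookkeeping there as the delicate part. Two points where you are more careful than the paper, and rightly so: forcing the annotation types of the two prefixes to coincide with $A$ by tracing axiom links (possible only because $u$ is absent from $E$ and environments carry only positive atoms), and justifying that a linearly restricted name has no residual occurrences so the redex may be assumed to sit directly under $\newl{u}{A}$. The one thing to drop is the appeal to lemma \ref{lemma:interpolation} in the last step: there is no interpolant to find there, and the lemma buys you nothing.
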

\begin{proof}
  Thanks to lemma \ref{lemma:congruence}, we can reason up to structural
  congruence.
  For an interaction step on a linear channel, we have
  $\newl{u}{A}(\outf{u}{¤v}.P\para\inb{u}{¤x}.Q)\redpi{τ}P\para Q[¤v/¤x]$.
  The left-hand side is typed as follows (using the simplification from
  lemma \ref{lemma:subtyping-commutation}):
  \begin{prooftree*}
    \Hypo{ E &⊢ P }
    \Infer1[out]{ u:↑A ⊗ (¤v:A ⅋ E) &⊢ \outf{u}{¤v}.P }
    \Hypo{ ¤x:A ⊗ F &⊢ Q }
    \Infer1[in]{ u:↓A ⊗ F &⊢ \inb{u}{¤x}.Q }
    \Infer2[para]{ (u:↑A⊗(¤v:A⅋E))⅋(u:↓A⊗F) ⊢ \outf{u}{¤v}.P\para\inb{u}{¤x}.Q }
    \Infer1[sub]{ (u:↑A⅋u:↓A) ⊗ H ⊢ \outf{u}{¤v}.P\para\inb{u}{¤x}.Q }
    \Infer1[new1]{ H ⊢ \newl{u}{A}(\outf{u}{¤v}.P\para\inb{u}{¤x}.Q) }
  \end{prooftree*}
  with the hypothesis that no name in $¤x$ occurs in $F$.
  The \rulename{sub} rule is justified by an MELL proof of
  $⊢(((u:↑A)^⊥⊗(u:↓A)^⊥)⅋H^⊥),(u:↑A⊗(¤v:A⅋E))⅋(u:↓A⊗F)$.
  By lemma \ref{lemma:mell-subst}, we can replace the atomic formula $u:↓A$ by
  $¤v:A$ and the atomic formula $u:↑A$ by $(¤v:A)^⊥$, then we get a proof of
  \[
    ⊢ (¤v:A ⊗ (¤v:A)^⊥) ⅋ H^⊥, ((¤v:A)^⊥ ⊗ (¤v:A ⅋ E)) ⅋ (¤v:A ⊗ F)
  \]
  The following sequents are easily proved in MELL:
  \begin{gather*}
    ⊢ H^⊥, (¤v:A⅋(¤v:A)^⊥)⊗H \\
    ⊢ (¤v:A ⅋ ((¤v:A)^⊥ ⊗ E^⊥)) ⊗ ((¤v:A)^⊥ ⅋ F^⊥), E ⅋ (¤v:A ⊗ F)
  \end{gather*}
  so we get a proof of $⊢H^⊥,(¤v:A⊗F)⅋E$ by composition, which justifies the
  typing:
  \begin{prooftree*}
    \Hypo{ E ⊢ P }
    \Hypo{ ¤v:A ⊗ F ⊢ Q[¤v/¤x] }
    \Infer2[para]{ E ⅋ (¤v:A ⊗ F) &⊢ P\para Q[¤v/¤x] }
    \Infer1[sub]{ H &⊢ P\para Q[¤v/¤x] }
  \end{prooftree*}
  The case of a reduction on a non-linear channel is similar, with some extra
  work to handle duplication; details can be found in
  appendix \ref{app:subject-reduction}.
\end{proof}

Remark that the introduction of negated atoms in the proof above makes us go
through environment \emph{formulas} that are not proper types, although
composition by cut provides a subtyping between environment types.
These intermediate steps correspond to the introduction in our system of axiom
rules that transport arbitrary behaviours (here the $¤v:A$) with no
counterpart in the terms, as a decomposition of the name passing mechanism.
This is similar to the central role of axioms in the
proofs-as-schedules \cite{beffara-2014-proof} paradigm.

\subsection{The role of “new”}

The subject reduction property is formulated for reductions on
private channels, {i.e.} names that are explicitly created in the term.
Indeed, the property fails without this assumption: not only is the type not
preserved (which is expected in the case of linear capabilities), but
communicated data may not have proper types.
For instance, in a typed term like
\begin{prooftree*}
  \Hypo{ E ⊢ P }
  \Infer1[out]{ u:↑A ⊗ (v:A ⅋ E) ⊢ \outf{u}{v}.P }
  \Hypo{ x:B ⊗ F ⊢ Q }
  \Infer1[in]{ u:↓B ⊗ F ⊢ \inb{u}{x}.Q }
  \Infer2[para]{ (u:↑A ⊗ (v:A ⅋ E)) ⅋ (u:↓B ⊗ F)
    ⊢ \outf{u}{v}.P \para \inb{u}{x}.Q }
\end{prooftree*}
the name $v$ has type $A$ but the name $x$ has type $B$, and there is no
reason that $A$ and $B$ are compatible, thus in general we cannot type the
reduct $P\para Q[v/x]$.

We do not consider this a serious defect of the system, it is mostly a matter
of presentation.
Indeed, the purpose of typing is to ensure proper composition of processes,
and the creation of channels is part of the composition operation.
Therefore, composition only makes sense in the presence of name creation
operators, and in the example above neither \rulename{new1} nor
\rulename{newω} applies if $A$ and $B$ do not match.
We could reformulate our system so that situations like the one above are
forbidden by typing.
A natural approach would be to enforce syntactic constraints on environment
types, for instance that linear capability assignments occur at most once,
that dual capability assignments have matching types, etc.
We chose not to include such restrictions for simplicity, relying on the
above justification.

\subsection{Properties of typed processes}

It can be proved that processes typed in our system are well-behaved:
\begin{theorem}[termination]
  Typed processes have no infinite sequence of transitions on
  private names.
\end{theorem}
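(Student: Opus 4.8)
The plan is to exhibit a measure on typed processes that strictly decreases along every reduction on a private name. The natural candidate comes from the linear-logic structure of the typing derivation: since our core type system corresponds to MELL proofs and MELL enjoys cut elimination with a terminating reduction, a reduction step on a private channel should correspond to a cut-elimination step (or a bounded sequence of them) on the underlying proof. So the first step is to make precise the correspondence between a reduction $P \redpi{\tau} P'$ — which by definition happens under a \rulename{new} operator, i.e.\ at a channel introduced by an axiom of the form $[x]^k$ — and a rewriting of the associated MELL proof: the interaction consumes the output/input prefixes at $u$ against the axiom $[u]^k$, which is exactly the pattern of a cut between an axiom link and a pair of dual introduction rules. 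The subject-reduction proof already does the relevant proof surgery (substitution of $\vec v : A$ for $u : \downarrow A$ and its dual, followed by composition by cut), so the raw material is in place; what remains is to read that surgery as a genuine cut-reduction and check it is length-decreasing in an appropriate sense.

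The key technical steps, in order, would be: (1) fix the restricted derivation shape of lemma \ref{lemma:subtyping-commutation} so that \rulename{sub} rules sit only above \rulename{in} and \rulename{new} and at the root, giving a canonical proof object; (2) define a weight $\mu(P)$ on a typed term, for instance the total number of prefix constructors weighted by the exponential depth at which they occur, or more robustly the size of a cut-normal form of the associated MELL proof net built from the derivation; (3) treat the linear case first, where $\newl{u}{A}(\outf{u}{\vec v}.P \para \inb{u}{\vec x}.Q) \redpi{\tau} P \para Q[\vec v/\vec x]$ removes two prefixes and one \rulename{new}, so $\mu$ strictly drops; (4) treat the non-linear case $\newr{u}{A}(\dots)$, where an output against a replicated input $\inr{u}{\vec x}.Q$ produces a fresh copy of $Q$ — here the \rulename{in!} rule forces the residual context $E^!$ to be purely of the form $\oc y_1{:}T_1 \otimes \cdots \otimes \oc y_n{:}T_n$, so each copy of $Q$ only reintroduces capabilities already guarded by $\oc$, and the weight must be chosen (lexicographically on exponential depth, in the style of standard MELL strong-normalisation arguments) so that the duplication is paid for by a decrease at a strictly outer level; (5) propagate the decrease through the contextual closure rules for $\para$ and $\new{u}$, which is immediate since $\mu$ is additive on $\para$ and invariant under $\new$ on names not involved, and through structural congruence using lemma \ref{lemma:congruence} together with the fact that $\mu$ is congruence-invariant.

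The main obstacle is step (4): the replicated-input case is exactly where naive size measures fail, since one reduction creates an unbounded number of residuals. The payoff of having grounded the system in MELL is precisely that this is the classical difficulty of proving strong normalisation for cut elimination in the presence of exponentials, and it is solved by the usual device of a measure that is lexicographic in the modal nesting depth (or, equivalently, by appealing to a reducibility-candidates / orthogonality argument on the proof-net semantics). I would therefore either adapt the standard MELL strong-normalisation proof to our term-annotated setting — checking that the syntactic restrictions on \rulename{in!} and the side conditions on \rulename{new} guarantee that no copied capability is ever at strictly greater depth than the trigger — or, more cleanly, factor the argument through a translation of typed processes into MELL proof nets, prove that $\redpi{\tau}$ maps to a nonempty sequence of cut reductions, and invoke strong normalisation of MELL cut elimination as a black box. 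A secondary subtlety to verify is that the \rulename{sub} rules, being MELL derivations themselves, do not introduce cuts that could generate extra reduction ahead of the process steps; lemma \ref{lemma:subtyping-commutation} confines them enough that this is a finite bookkeeping check.
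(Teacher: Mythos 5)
The paper does not actually contain a proof of this theorem: it explicitly defers to two methods, realisability/orthogonality and a syntactic argument ``relating execution with the cut-elimination procedure of linear logic''. Your proposal is an elaboration of the second of these (with the first mentioned as a fallback), so the overall direction is exactly the one the author intends. However, as a proof plan it has two concrete weak points. First, the measures you offer in step (2) do not work as stated. A lexicographic measure on modal nesting depth is a device for \emph{light} logics (ELL, LLL), where boxes preserve depth; in full MELL, opening a box merges its contents into the ambient depth and duplication can occur at the same depth, so the depth-lexicographic multiset does not decrease --- this is precisely why strong normalisation for MELL needs reducibility candidates or a Gandy-style argument rather than a combinatorial measure. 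Likewise ``size of the cut-normal form'' is not obviously decreasing: the subject-reduction surgery replaces the atom $u{:}{\downarrow}A$ by the compound formula $\vec v{:}A$ (lemma \ref{lemma:mell-subst}), which $\eta$-expands axioms and can \emph{grow} the normal form, and then composes by cut with auxiliary proofs. Nothing in that surgery is literally a cut-elimination step, so ``reading it as a genuine cut-reduction'' is not bookkeeping but the actual content of the proof.

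The only robust route in your plan is therefore the last one you mention: translate typed processes (in the canonical derivation shape of lemma \ref{lemma:subtyping-commutation}) into MELL proof structures, prove a genuine simulation --- every $\redpi{\tau}$ step on a private name corresponds to a nonempty sequence of cut-elimination steps on \emph{one fixed} net associated with the original term, not on a net rebuilt after each step --- and then invoke strong normalisation of MELL cut elimination. Setting up that simulation is the hard part: the \rulename{sub} proofs must be incorporated into the net once and for all (so that they cannot keep supplying fresh cuts), the replicated input must be translated as a box cut against the $\oc$-side of the $[u]^\omega$ axiom so that a communication is a box-opening/contraction step, and the atom substitution in the subject-reduction proof must be recognised as the residual of $\eta$-expanding the $[u]^k$ axiom link. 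None of these steps is carried out in your proposal, so while the strategy is the right one, the proof is not yet there; the alternative realisability argument (interpreting capabilities by operational tests and formulas as in phase semantics, as in the author's cited earlier work) avoids the simulation entirely and is probably the shorter path.
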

\begin{theorem}[lock-freeness]
  In every execution of a typed closed term, every active output eventually
  interacts with an input.
\end{theorem}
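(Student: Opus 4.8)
The plan is to derive lock-freeness from the two computational properties already available — subject reduction (Theorem~\ref{thm:subject-reduction}) and termination — together with a deadlock-freedom lemma for typed normal forms and a bookkeeping observation about active outputs. That observation is the following: a fixed occurrence of an active output prefix is never duplicated by a reduction step (when a replicated input fires, the copied body keeps all its prefixes guarded), is never turned into a guarded subterm (reduction only erases prefixes, never inserts them), and disappears only by interacting with an input, since both communication rules in the definition of reduction consume the output prefix of the redex they fire. Consequently, along any execution $P_0 \redpi{τ} P_1 \redpi{τ} \cdots$, a fixed active output occurring in $P_i$ persists — up to $≡$ and renaming of its objects — in every later $P_j$ until the step, if there is one, at which it interacts with an input.

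The core is then a deadlock-freedom lemma: a typed closed process in $\redpi{τ}$-normal form has no active output. Using Lemma~\ref{lemma:congruence}, such a process can be written $\new{x_1}\cdots\new{x_n}(G_1 \para \cdots \para G_m)$ with each $G_i$ a prefixed process and all $\nop$ components discarded; all names are private, and since the process is a normal form no two of the $G_i$ are dual prefixes on a common name. Suppose, for a contradiction, that some $G_i$ is an output on a private name $u$. By inspection of the typing rules, the input capability on $u$ required, alongside the output one, in the premise of the \rulename{new} rule binding $u$ — hence present somewhere along the derivation — can only be supplied by an input prefix on $u$ in the term; were that prefix active it would be one of the $G_j$, and then $G_i, G_j$ would form a redex, so it is guarded, hence located inside some $G_k$ whose outermost prefix is on another private name. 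Iterating, every active prefix turns out to be blocked on another active prefix, so among the finitely many $G_1,\dots,G_m$ one obtains a cycle of active prefixes each waiting for the next. But \rulename{para} splits the typing context while \rulename{new} inserts a cut joining the two polarities of a private name, so the \emph{communication graph} read off the typing derivation is acyclic and no such cycle can occur — for non-linear names one additionally uses that a replicated input is always enabled. (The same argument in fact shows the normal form has no active communication prefix at all.)

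Assembling the pieces: let $P$ be a typed closed term and fix an execution of it, which by termination is finite; let $P_n$ be its last process, still typed by subject reduction and still closed. Any active output occurring in some $P_i$ either interacts with an input at some step $j \ge i$, or else — by the bookkeeping observation — persists unchanged up to $P_n$; but the deadlock-freedom lemma forbids active outputs in $P_n$, so the first alternative must hold. Hence in every execution every active output eventually interacts with an input.

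The step I expect to be the real obstacle is the deadlock-freedom lemma, and within it the argument excluding a cycle of blocked active prefixes: it requires making precise the correspondence between the \rulename{new}/\rulename{para}/\rulename{sub} structure of a typing derivation and the \emph{waits-on} relation between prefixes, and then reading acyclicity off the multiplicative structure of the underlying MELL derivation. The exponential case (\rulename{newω}, with contraction and weakening on $\wn$) will need separate and more delicate bookkeeping, since a replicated input is persistently available and the relevant acyclicity is that of exponential boxes rather than of a plain cut-tree.
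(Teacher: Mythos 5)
The paper deliberately omits a proof of this theorem: it only indicates that one can be obtained either by realisability or ``by syntactic means by relating execution with the cut-elimination procedure of linear logic''. Your proposal is an instance of the second route, and its overall architecture --- persistence of active outputs under reduction, termination plus subject reduction, and a deadlock-freedom lemma for closed normal forms --- is the right one. The bookkeeping observation and the final assembly are sound, modulo reading ``every execution'' as ``every maximal execution'' (otherwise the last term of the execution need not be a normal form).

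The genuine gap is where you yourself locate it, and it is more than unfinished bookkeeping: the deadlock-freedom sketch contains an incorrect step. You argue that the input capability on a private name $u$, required alongside the output one in the premiss of the \rulename{new} rule binding $u$, ``can only be supplied by an input prefix on $u$ in the term''. This is false: the atom $u:↓A$ may instead occur in object position of an output, i.e.\ inside the $¤v:A$ part of an \rulename{out} rule for some prefix $\outf{w}{\ldots u \ldots}.P$, which corresponds to delegating the input capability on $u$ to whoever receives on $w$. Consequently the waits-on relation is not simply ``the dual prefix is guarded under some other active prefix'': a blocked output on $u$ may be waiting on a chain of delegations, and the cycle to be excluded runs through the multiplicative structure of the subtyping proofs (the $⊗$/$⅋$ alternation, the axioms, and the cuts introduced by \rulename{new}), not merely through prefix nesting. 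Excluding such cycles is exactly the acyclicity/cut-elimination content of MELL that the paper says the argument must rely on; in your sketch this is asserted (``the communication graph read off the typing derivation is acyclic'') rather than proved. The exponential case adds a further wrinkle you only partly anticipate: $\wn u:↓A$ can be discharged by weakening, so a replicated name may have no input at all, and one must check separately that no \emph{linear} output on such a name is typeable. Until the deadlock-freedom lemma is actually established along these lines, what you have is a correct outline of the intended syntactic proof rather than a proof.
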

Proofs of these facts can be obtained by realisability techniques, as in
previous work by the
author \cite{beffara-2005-logique,beffara-2006-concurrent}, or by syntactic
means by relating execution with the cut-elimination procedure of linear
logic.
We do not include proofs because they are out of the scope of this paper.
Nevertheless, a fundamental point in the arguments is that they rely on the
consistency of linear logic (through the cut-elimination property).
In relaxations of the system introduced in section \ref{sec:existing}, we will
express systems which do not enjoy those properties, by means of inconsistent
extensions of this logic.

\subsection{Variations}

The choice of the polyadic π-calculus in the presentation of our system is
justified by the fact that it is very expressive and also very standard.
However, we can adapt our approach to most variants of the calculus.

\subparagraph*{Asynchrony}

This is the restriction on outputs to have no
continuations \cite{boudol-1992-asynchrony}.
The typing of a free output atom $\outf{u}{¤v}$,
considered as a simple process $\outf{u}{¤v}.\nop$, is as follows:
\[
  \begin{prooftree}
    \Infer0[nop]{ ⊥ ⊢ \nop }
    \Infer1[out]{ u:↑A ⊗ (¤v:A ⅋ ⊥) ⊢ \outf{u}{¤v}.\nop }
  \end{prooftree}
  \quad\leadsto\quad
  \begin{prooftree}
    \Infer0[out-async]{ u:↑A ⊗ ¤v:A ⊢ \outf{u}{¤v} }
  \end{prooftree}
\]
where the simplified type is appropriate since it is linearly equivalent to
the one on the left, because of neutrality of $⊥$ for $⅋$.
Apart from this rule, nothing is changed in the system for the asynchronous
π-calculus.

\subparagraph*{Internal mobility}

This is the restriction that output prefixes only communicate distinct bound
names \cite{sangiorgi-1996-calculus}.
This simplifies the theory of the calculus and makes it symmetric like CCS.
In our type system, we also get the symmetry in typing rules.
For this purpose we can introduce duality over behaviour types:
\begin{definition}[duality]
  For a behaviour type $A$, the dual $\dual{A}$ is defined inductively as:
  \begin{align*}
    \dual{↑A} &:= ↓A &
    \dual{\oc A} &:= \wn\dual{A} &
    \dual{A ⊗ B} &:= \dual{A} ⅋ \dual{B} &
    \dual{1} &:= ⊥ \\
    \dual{↓A} &:= ↑A &
    \dual{\wn A} &:= \oc\dual{A} &
    \dual{A ⅋ B} &:= \dual{A} ⊗ \dual{B} &
    \dual{⊥} &:= 1
  \end{align*}
\end{definition}

The dual $\dual{A}$ of a formula $A$ is a form of linear negation, except that
the dual of a capability $↑A$ is the capability $↓A$, whereas negations keep
capabilities unaffected in our environment formulas.
Note that we do not apply duality inside the capability, since we follow the
approach of i/o types, where this convention is the appropriate one.
Nevertheless, logically, the output capability contains a negation, as
illustrated by the bound output rule below.

\begin{lemma}[generalised new]\label{lemma:new*}
  The following rule is derivable, assuming the tuple $¤x$ is made of
  pairwise distinct names:
  \begin{prooftree*}
    \Hypo{ (¤x:A ⅋ ¤x:\dual{A}) ⊗ E ⊢ P }
    \Infer1[new*]{ E ⊢ \new{¤x}P }
  \end{prooftree*}
\end{lemma}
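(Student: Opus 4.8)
The plan is to prove the derived rule by induction on the structure of the behaviour type $A$ — equivalently on the number of capabilities it contains, which is the length of $\vec{x}$ — proving it for every environment type $E$ and every process $P$, so that the induction hypothesis may be used with a modified environment. (As with the component \rulename{new} rules, it is understood that the names in $\vec{x}$ do not occur in $E$; without this the conclusion $E ⊢ \new{\vec{x}}P$ would not even be well-formed.)

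The leaves of the induction are easy. If $A ∈ \{1, ⊥\}$ then $A$ has no capability, $\vec{x}$ is empty and $\new{\vec{x}}P = P$, while $\vec{x}:A ⅋ \vec{x}:\dual{A}$ is $1 ⅋ ⊥$ or $⊥ ⅋ 1$; since $E ≤ (1 ⅋ ⊥) ⊗ E$ in MELL (a short derivation from the rules for $1$, $⊥$, $⅋$ and the identity on $E$), one application of \rulename{sub} turns the premise into the conclusion. If $A$ is a bare capability, say $↑B$ or $↓B$, then $\vec{x}$ is a single name $x$, $\dual{}$ swaps the polarity of the capability, and $x:A ⅋ x:\dual{A}$ is, up to commutativity of $⅋$, exactly $[x]^1_B$, so \rulename{new1} gives $E ⊢ \new{x}P$; the modal cases $A = \oc T$ and $A = \wn T$ likewise match the premise of \rulename{newω} by the definition of $\dual{}$ on modalities and capabilities.

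For the inductive step take $A = A_1 ⊗ A_2$ (the case $A = A_1 ⅋ A_2$ is analogous, with the roles of the connectives swapped throughout). Split $\vec{x} = \vec{x}_1\vec{x}_2$ so that $\vec{x}_i$ carries the capabilities of $A_i$; then $\vec{x}:A = (\vec{x}_1:A_1) ⊗ (\vec{x}_2:A_2)$ and, since $\dual{A} = \dual{A_1} ⅋ \dual{A_2}$, also $\vec{x}:\dual{A} = (\vec{x}_1:\dual{A_1}) ⅋ (\vec{x}_2:\dual{A_2})$. The crucial step is the MELL subtyping
\[ ((\vec{x}_1:A_1) ⅋ (\vec{x}_1:\dual{A_1})) ⊗ ((\vec{x}_2:A_2) ⅋ (\vec{x}_2:\dual{A_2})) ⊗ E \;≤\; ((\vec{x}:A) ⅋ (\vec{x}:\dual{A})) ⊗ E, \]
which, since $⊗$ is monotone for $≤$, follows from the schematic entailment $(a ⅋ c) ⊗ (b ⅋ d) ≤ (a ⊗ b) ⅋ (c ⅋ d)$ for arbitrary formulas $a, b, c, d$ (for the $A = A_1 ⅋ A_2$ case one uses instead $(a ⅋ c) ⊗ (b ⅋ d) ≤ (a ⅋ b) ⅋ (c ⊗ d)$); both have short cut-free derivations, essentially applications of linear distributivity. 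Feeding this into \rulename{sub} — and reassociating and commuting $⊗$ at will, since these are MELL isomorphisms — brings the premise to the form $((\vec{x}_2:A_2) ⅋ (\vec{x}_2:\dual{A_2})) ⊗ (((\vec{x}_1:A_1) ⅋ (\vec{x}_1:\dual{A_1})) ⊗ E) ⊢ P$. Applying the induction hypothesis first to $A_2$ (with environment $((\vec{x}_1:A_1) ⅋ (\vec{x}_1:\dual{A_1})) ⊗ E$) yields $((\vec{x}_1:A_1) ⅋ (\vec{x}_1:\dual{A_1})) ⊗ E ⊢ \new{\vec{x}_2}P$, and then to $A_1$ (with environment $E$) yields $E ⊢ \new{\vec{x}_1}\new{\vec{x}_2}P = \new{\vec{x}}P$. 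The freshness premises of these two invocations hold because $\vec{x}_1$ and $\vec{x}_2$ are disjoint ($\vec{x}$ being pairwise distinct) and behaviour types contain no channel names, so the only genuine requirement is that $\vec{x}$ avoid $E$, which is the standing assumption.

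The main obstacle is this inductive step, and specifically getting the regrouping right: the multiplicative layer of $\vec{x}:A ⅋ \vec{x}:\dual{A}$ cannot be re-bracketed arbitrarily (for instance $(a ⅋ c) ⊗ (b ⅋ d) ≤ a ⅋ b ⅋ c ⅋ d$ is \emph{not} provable), so the pairing of each $A_i$ with its dual must be preserved — and the entailment one does use must be oriented correctly for the contravariant \rulename{sub} rule. What remains is bookkeeping: checking that $\dual{}$ lines up with the premises of \rulename{new1} and \rulename{newω}, and carrying the freshness side-conditions through the nested name creations.
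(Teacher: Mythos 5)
Your proof is correct and follows essentially the same route as the paper's: structural induction on $A$, with the units handled by a linear equivalence, bare and modal capabilities dispatched to \rulename{new1} and \rulename{new$ω$}, and the multiplicative cases reduced via a linear-distributivity subtyping followed by two uses of the induction hypothesis. Your version is somewhat more explicit about the exact MLL entailment and the orientation of \rulename{sub}, but the argument is the same.
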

\begin{proof}
  This is proved by induction $A$.
  The base case is when $A$ is a linear or exponential capability, then one of
  the \rulename{new} rules applies directly.
  If $A=⊥$, then $¤x$ is empty and we have
  $(¤x:A⅋¤x:\dual{A})⊗E=(⊥⅋1)⊗E≃1⊗E≃E$ so the rule holds by linear
  equivalence.
  The case $A=1$ is similar.
  If $A=B⅋C$ then $¤x$ splits as $¤y,¤z$ so that we have
  \begin{prooftree*}
    \Hypo{ ((¤y:B ⅋ ¤z:C) ⅋ (¤y:\dual{B} ⊗ ¤z:\dual{C})) ⊗ E ⊢ P }
    \Infer1[sub]{ (¤z:C ⅋ ¤z:\dual{C}) ⊗ (¤y:B ⅋ ¤y:\dual{B}) ⊗ E ⊢ P }
    \Infer1[new*]{ (¤y:B ⅋ ¤y:\dual{B}) ⊗ E ⊢ \new{¤z}P }
    \Infer1[new*]{ E ⊢ \new{¤y}\new{¤z}P }
  \end{prooftree*}
  where the \rulename{sub} rule is justified by a simple MLL proof.
  The case $A=B⊗C$ is similar.
\end{proof}

Using this lemma, we can derive a typing rule for bound output:
\[
  \begin{prooftree}
    \Hypo{ ¤x:\dual{A} ⊗ E ⊢ P }
    \Infer1[out]{ u:↑A ⊗ (¤x:A ⅋ (¤x:\dual{A} ⊗ E)) ⊢ \outf{u}{¤x}.P }
    \Infer1[sub]{ u:↑A ⊗ (¤x:A ⅋ ¤x:\dual{A}) ⊗ E ⊢ \outf{u}{¤x}.P }
    \Infer1[new*]{ u:↑A ⊗ E ⊢ \new{¤x}\outf{u}{¤x}.P }
  \end{prooftree}
  \quad\leadsto\quad
  \begin{prooftree}
    \Hypo{ ¤x:\dual{A} ⊗ E ⊢ P }
    \Infer1[out-bound]{ u:↑A ⊗ E ⊢ \outb{u}{¤x}.P }
  \end{prooftree}
\]

\subparagraph*{Fusions}

Our system can be extended to calculi with free input, such as
the fusion calculus \cite{parrow-1998-fusion}.
The appropriate formulation is with a preorder over
names \cite{hirschkoff-2013-name} generated by “arcs” $a/b$ which are explicit
substitution atoms.
The logical meaning of an arc is an implication $\oc(a:T⊸b:T)$ for any
capability type $T$: it allows a capability on $a$ to be used as a
capability on $b$; the modality is because the substitution is permanently
available.
The typing rule would be an axiom like $\wn(a:T⊗(b:T)^⊥)⊢a/b$; this
implies the handling of negative atoms, which may have an impact on the
structure of the system.
We defer the formal development of this extension to future work, since it
exceeds the scope of the present paper.

\section{Existing systems as fragments and extensions}
\label{sec:existing}

In this section, we describe formally how our system can express known type
systems for processes, using relaxations and identifying fragments.
By \emph{relaxation}, we mean that we add new logical rules to MELL in order
to prove more subtypings.
The resulting system need not be logically consistent, the minimal
requirements are that the new rules preserve
\begin{itemize}
\item the interpolation lemma, so that typing is still preserved under
  structural congruence,
\item the substitution lemma, so that subject reduction still holds.
\end{itemize}

\subsection{Linearity and i/o types}
\label{sec:kpt}

We show here how our system can express plain i/o types à la Pierce and
Sangiorgi \cite{pierce-1993-typing} and their extension with linearity by
Kobayashi, Pierce and Turner \cite{kobayashi-1999-linearity} (hereafter
referred to as KPT).
We develop the relationship only with KPT, since plain i/o types are its
fragment without linear types.
We refer the reader to the original paper for the notations.

\begin{definition}
  Let $ℒ$ be the fragment of KPT where
  \begin{itemize}
  \item in channel types, only pure input or output capabilities are used,
  \item linear channel creations must create both capabilities,
  \item the boolean data type is not used.
  \end{itemize}
  The translation $⟦⋅⟧$ maps channel types of $ℒ$ to channel types, tuples of
  channel types of $ℒ$ to behaviour types and contexts of $ℒ$ to environment
  types as follows:
  \begin{gather*}
    ⟦ \oc^1¤T ⟧ := ↑⟦¤T⟧ \quad
    ⟦ \wn^1¤T ⟧ := ↓⟦¤T⟧ \quad
    ⟦ \oc^ω¤T ⟧ := \oc↑⟦¤T⟧ \quad
    ⟦ \wn^ω¤T ⟧ := \wn↓⟦¤T⟧ \\
    ⟦ T_1…T_n ⟧ := ⟦T_1⟧ ⊗ … ⊗ ⟦T_n⟧ \\
    ⟦ x:\oc^m¤T ⟧ := x:⟦\oc^m¤T⟧ \quad
    ⟦ x:\wn^m¤T ⟧ := x:⟦\wn^m¤T⟧ \quad
    ⟦ x:↕^m¤T ⟧ := x:⟦\oc^m¤T⟧ ⅋ x:⟦\wn^m¤T⟧ \\
    ⟦ x_1:T_1, … x_n:T_n ⟧ := ⟦x_1:T_1⟧ ⊗ … ⊗ ⟦x_n:T_n⟧
  \end{gather*}
\end{definition}

The restriction on channel types is of minor importance as it can be lifted by
a simple coding: communicating channels with no capabilities is useless so it
can be removed, and instead of communicating a channel with both capabilities,
one can communicate each capability as distinct arguments.
As for the restriction on channel creation, it is harmless since a channel
created without both capabilities will never have any communication.
The exclusion of booleans is simply because our system, for simplicity of
presentation, does not include base data types; extension of the system with
such types is not problematic.

\begin{theorem}
  A typing judgement $Γ⊢P$ holds in $ℒ$ if and only if the judgement $⟦Γ⟧⊢P$
  holds in our system extended with the logical equivalences
  \begin{align*}
    A ⊗ B &≃ A ⅋ B &
    1 &≃ ⊥ &
    \oc A &≃ \wn A
  \end{align*}
\end{theorem}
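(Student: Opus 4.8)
The plan is to prove the two implications separately, both resting on a single bridging lemma about subtyping. That lemma should state that for environments $Γ,Δ$ of $ℒ$ one has $⟦Γ⟧≤⟦Δ⟧$ in the relaxed logic if and only if $Γ$ is a KPT-subtype of $Δ$, and moreover $⟦Γ_1⟧⅋⟦Γ_2⟧≃⟦Γ_1+Γ_2⟧$ whenever the partial sum $Γ_1+Γ_2$ of KPT is defined. The point is that the three added equivalences make the two multiplicative connectives indistinguishable and the two modalities vanish, so that every environment type becomes, up to $≃$, a plain "labelled multiset of capabilities" and the relaxed subtyping degenerates into the combinatorial relation that mirrors KPT's subsumption together with its context-addition operation. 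I would establish this lemma first, since everything else is bookkeeping around it.

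Given the lemma, soundness (the direction $Γ⊢P$ in $ℒ$ implies $⟦Γ⟧⊢P$) is an induction on the $ℒ$-derivation in which each KPT rule is simulated by a short derivation of the relaxed system: input and output prefixes become \rulename{in}, \rulename{in!} and \rulename{out}; a restriction becomes \rulename{new1} or \rulename{newω}, where the $ℒ$-side condition that a linear creation introduces both capabilities is exactly what makes the premiss fit the shape $[x]^k_A⊗E$ required by the \rulename{new} rules; parallel composition becomes \rulename{para} followed by a \rulename{sub} justified by $⟦Γ_1⟧⅋⟦Γ_2⟧≃⟦Γ_1+Γ_2⟧$; inaction becomes \rulename{nop} followed by a \rulename{sub} to fit the context; and KPT subsumption becomes \rulename{sub} via the bridging lemma. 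None of these steps is difficult once the lemma is available.

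Completeness (the direction $⟦Γ⟧⊢P$ implies $Γ⊢P$) is the more delicate one. Starting from a relaxed derivation of $⟦Γ⟧⊢P$, I would first apply Lemma~\ref{lemma:subtyping-commutation} to put it in the normal form where \rulename{sub} occurs only immediately above \rulename{in} and \rulename{new} rules and at the root, and then proceed by induction on $P$. For each syntactic construct the last structural rule (possibly preceded by a single \rulename{sub}) dictates the shape of the environment, and two facts let the induction go through: firstly, every behaviour type and environment type is $≃$ to the translation of some $ℒ$-type — this is the surjectivity, modulo the equivalences, of $⟦\cdot⟧$ onto the labelled-capability-multiset normal forms, and it is what allows the induction hypothesis to be applied to the premisses — and secondly, each \rulename{sub} step, occurring between translated environments, comes from a genuine KPT subsumption by the converse half of the bridging lemma; one also closes $ℒ$-typing under $≃$ on the context, so that the witnesses produced by the induction can be taken to be actual translations. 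The leaves and the closure under structural congruence are then handled directly.

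The principal obstacle is the hard half of the bridging lemma, namely that the relaxed logic proves no subtyping between translated environments beyond those coming from KPT. I expect to obtain this from a cut-elimination or normalisation analysis of MELL extended with the three new axiom schemes, yielding an explicit description of the relation $⟦Γ⟧≤⟦Δ⟧$ that can then be matched against KPT's context preorder. The secondary task of checking that these axiom schemes preserve interpolation (Lemma~\ref{lemma:interpolation}) and substitutivity (Lemma~\ref{lemma:mell-subst}), and hence that the relaxed system still enjoys subject reduction and congruence invariance, is comparatively easy, since each added equivalence relates formulas built from exactly the same atoms.
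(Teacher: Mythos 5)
Your proposal follows essentially the same route as the paper's (very brief) sketch: your ``bridging lemma'' is precisely the paper's observation that, under the three added equivalences, environment types collapse up to $≃$ into multisets of capability assignments $x:T$ and $\oc x:T$ (the multiplicity of the latter being irrelevant), which yields the reverse mapping onto $ℒ$-contexts and reduces both implications to a rule-by-rule simulation. Your version is more detailed and correctly isolates the crux --- the completeness half of the bridging lemma --- which the paper hides under ``our rules are also valid in $ℒ$, only taking care of multiple occurrences of a name \ldots by appropriate constraints on the use of contraction''. The use of Lemma~\ref{lemma:subtyping-commutation} to normalise derivations before the completeness induction, and the check that the added axioms preserve Lemmas~\ref{lemma:interpolation} and~\ref{lemma:mell-subst}, also match what the paper does or asserts.

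One concrete caveat about your plan for the crux. You propose to characterise the relation $⟦Γ⟧≤⟦Δ⟧$ ``from a cut-elimination or normalisation analysis of MELL extended with the three new axiom schemes'', but the paper explicitly points out that these equivalences make the underlying logic inconsistent and that \emph{cut elimination is lost} (already $1≃⊥$ amounts to adopting the mix rules). So a cut-elimination argument in the extended calculus is not available as stated. The characterisation is still reachable by other means --- for instance by quotienting the extended logic onto an equivalent ``flat'' system over multisets of literals whose provability is described combinatorially, or by a phase-semantic soundness argument of the kind the paper alludes to in its discussion of semantics --- but you should not rely on normalising proofs of the relaxed system. With that substitution, the proposal is a faithful and more rigorous expansion of the paper's sketch.
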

\begin{proof}[Sketch of proof]
  Using these equivalences, environment types, up to associativity,
  commutativity and neutrality, are just multisets of capability assignments of
  the shape $x:T$ or $\oc x:T$.
  Moreover, the multiplicity of each $\oc x:T$ does not matter.
  Similarly, behaviour types are now just tuples of capabilities.
  This provides a reverse mapping from our types to those of $ℒ$.
  Then it is easy to check that each typing rule in $ℒ$ can be derived in our
  system, which proves the direct implication.
  For the reverse implication, we just have to check that our rules are also
  valid in $ℒ$, only taking care of multiple occurrences of a name in an
  environment type by appropriate constraints on the use of contraction.
\end{proof}

The addition of the logical equivalences can be achieved by adding to the
proof rules of MELL any rules that implement these equations as linear
equivalences (as new axiom rules or as new introduction rules for the
connectives involved; these methods are equivalent).
It is not hard to check that this relaxation does preserve the interpolation
and substitution lemmas.
Of course, lock-freeness and termination are lost, and this is directly
related to the fact that the equivalences make the logic inconsistent: cut
elimination is lost.

\subsection{Control, sequentiality, etc.}
\label{sec:hyb}

In a series of
works \cite{berger-2001-sequentiality,berger-2005-genericity,honda-2004-control,yoshida-2001-strong},
Berger, Honda and Yoshida studied refinements of i/o types with linearity
where various properties are enforced including sequentiality, strong
normalisation, or the behaviour of functional computation with control.
The latter system (hereafter called HYB, we refer the reader to the
paper \cite{honda-2004-control} for the notations) was put in precise
correspondence with proof nets for polarised linear logic by Honda and
Laurent \cite{honda-2010-exact} and this correspondence fits in our system.
\begin{definition}
  The translation $⟦⋅⟧$ from HYB types to behaviour types, HYB contexts to
  environment types and processes to processes is defined as follows:
  \begin{gather*}
    ⟦ (¤{τ})^? ⟧ := ↑\dual{⟦¤{τ}⟧}, \quad
    ⟦ (¤{τ})^! ⟧ := ↓⟦¤{τ}⟧, \quad
    ⟦ τ_1 … τ_n ⟧ :=
      (\oc⟦τ_1⟧ ⅋ \wn\dual{⟦τ_1⟧}) ⊗ … ⊗ (\oc⟦τ_n⟧ ⅋ \wn\dual{⟦τ_n⟧}) \\
    ⟦ x:τ_O ⟧ := \oc x:⟦τ_O⟧, \quad
    ⟦ x:τ_I ⟧ := \oc x:\dual{⟦τ_I⟧} ⅋ \wn x:⟦τ_I⟧, \\
    ⟦ \inr{x}{y_1…y_n}.P ⟧ := \inr{x}{y_1y'_1…y_ny'_n}.⟦ P ⟧ 
      \qquad\text{with $y'_1,…,y'_n$ fresh,} \\
    ⟦ \outb{x}{y_1…y_n}P ⟧ :=
    \new{y_1…y_n}(\outf{x}{y_1y_1…y_ny_n} \para ⟦ P ⟧) .
  \end{gather*}
\end{definition}

This translation is essentially the isomorphism between π-calculus types à la
HYB and formulas of LLP, plus the capability indications.
A crucial difference is that we have to code every communication of a single
name as the communication of a pair for the input and the output capabilities,
since in HYB an input type $(¤{τ})^!$ actually allows the presence of outputs,
while our type system does not allow sending both capabilities as a single
argument.
Through this translation, we do capture HYB's typing, and the following
theorem is proved by writing translations between the two systems:

\begin{theorem}
  A judgement $⊢P\triangleright x_1:τ_1,…,x_n:τ_n$ is derivable in HYB if and
  only if the judgement $⟦x_1:τ_1⟧⊗…⊗⟦x_n:τ_n⟧⊢P$ is derivable in our system
  extended with the equations $A⊗B≃A⅋B$ and $1≃⊥$.
\end{theorem}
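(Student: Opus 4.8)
The plan is to prove both implications by induction on typing derivations, following the same pattern as the proof sketch for the system $ℒ$ above; the genuinely new ingredient is the name-doubling built into the process translation, which has to be handled coherently at the level of derivations and not only of terms. Throughout, the two multiplicative equations $A⊗B≃A⅋B$ and $1≃⊥$ are used freely through \rulename{sub}, so that, up to these equations, a translated HYB context behaves like a multiset of polarised capability assignments of the two shapes $\oc x:⟦τ_O⟧$ and $\oc x:\dual{⟦τ_I⟧}⅋\wn x:⟦τ_I⟧$, while the exponentials $\oc$ and $\wn$ still record the input/output polarity (these equations do \emph{not} include $\oc A≃\wn A$, unlike the KPT case).

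For the forward implication I would proceed by induction on a HYB derivation of $⊢P\triangleright x_1:τ_1,…,x_n:τ_n$, checking that each HYB typing rule is admissible in our extended system once everything is translated. The replicated-input rule is matched by \rulename{in!} applied to the doubled argument sequence $y_1y_1'…y_ny_n'$, the fresh name $y_i'$ carrying the $\wn\dual{⟦τ_i⟧}$ half of the pair $\oc⟦τ_i⟧⅋\wn\dual{⟦τ_i⟧}$ that $⟦τ_1…τ_n⟧$ assigns to the $i$-th argument; the bound-output rule unfolds, via the definition of $⟦\outb{x}{y_1…y_n}P⟧$, into \rulename{out} followed by the \rulename{new*} rule of lemma \ref{lemma:new*}, again over doubled names; weakening and the remaining structural rules of HYB become applications of \rulename{sub} justified by simple MELL proofs.

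For the reverse implication I would first bring a derivation of $⟦Γ⟧⊢⟦P⟧$ to the canonical shape allowed by lemma \ref{lemma:subtyping-commutation}, and then normalise it further so that it treats each doubled pair $(y_i,y_i')$ uniformly, any \rulename{sub} step breaking the pairing being absorbed using the multiplicative equations. One then reads off a reverse translation on types: modulo $A⊗B≃A⅋B$ and $1≃⊥$, an environment type in the image of $⟦⋅⟧$ is exactly a multiset of assignments of the two shapes above, and the linearity/polarity discipline of HYB, the one matched with polarised linear logic by Honda and Laurent, corresponds to the constraint that the non-exponential capabilities occur with the right multiplicity. It then remains to check that \rulename{para}, \rulename{in}, \rulename{in!}, \rulename{out}, \rulename{new1}, \rulename{newω} and \rulename{sub} are each admissible through this reverse translation. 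The only nontrivial rule is \rulename{sub}, for which one must show that entailment in MELL augmented with the two equations, between types lying in the image of $⟦⋅⟧$, proves nothing more than HYB's own subtyping on the corresponding types, and this is precisely where the polarity invariants underlying the correspondence of \cite{honda-2010-exact} do the work. I expect this last point, together with the derivation-level handling of the name-doubling, to be the main obstacle: a fully direct argument requires a careful invariant on proofs in the relaxed logic, and an appealing alternative would be to factor the whole statement through the Honda--Laurent correspondence between HYB and proof nets for polarised linear logic rather than reprove it from scratch.
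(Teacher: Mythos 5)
Your plan is essentially the paper's own: the paper proves this theorem only by the one-line remark that it ``is proved by writing translations between the two systems,'' i.e.\ a rule-by-rule check in both directions in the style of the KPT sketch, which is exactly what you elaborate (including the correct observations that only $A\otimes B\simeq A\parr B$ and $1\simeq\bot$ are assumed, and that the name-doubling and the reverse reading of types modulo these equations are the delicate points). Since the paper itself gives no more detail on the reverse direction than you do, your proposal matches its approach; your suggestion of factoring through the Honda--Laurent correspondence is a reasonable way to discharge the one step you flag as open.
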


Again, the identification of dual connectives makes the underlying logic
degenerate, and indeed the logic above does not ensure normalisation.
Honda and Laurent enumerate several restrictions of this system: acyclicity of
name dependence, input or output determinism, etc; in our system, these
restrictions mean than we do not identify dual connectives, then the theorem
above extends as an embedding of LLP/$π^c$ into our system.

The same approach can be used to handle other type systems of the same family,
we leave the formalisation of the correspondence for those systems to future
work.

\subsection{Session types}
\label{sec:session}

Caires and Pfenning \cite{caires-2010-session} formulated an equivalence
between dyadic session types \cite{honda-1993-types} and intuitionistic linear
logic, using a suitable interpretation of the connectives: $u:A⊸B$ means “on
$u$, receive a channel of type $A$ then proceed according to $B$”, dually
$u:A⊗B$ means “send a channel of type $A$, then proceed according to $B$”.
This implies that the type of a channel must change during an interaction,
following the progress of the session.
This seems to be incompatible with type systems in which a type is assigned to
each channel in a static way, including the present work, however the same
authors with DeYoung and Toninho \cite{deyoung-2012-cut} found a
reformulation of their correspondence (hereafter called DCPT) in the
asynchronous π-calculus where this contradiction vanishes.
The trick is that these channels must never have more than one active
occurrence per polarity and this can be turned into linearity by applying to
synchronous processes a translation $⟦⋅⟧$ defined as follows:
\begin{align*}
  ⟦ \inb{u}{x}.P ⟧ &:= \inb{u}{xu'}.⟦P⟧[u'/u] &
  ⟦ \outf{u}{v}.P ⟧ &:= \new{u'}(\outf{u}{vu'} \para ⟦P⟧[u'/v])
\end{align*}
where $u'$ is a fresh name that represents the state of $u$ at the next step
of interaction.
Of course, this translation does not make sense for general processes, but in
the case of the interaction discipline enforced by session types, this
transformation is perfectly adequate.

\begin{theorem}
  Let $⟦⋅⟧$ be the following translation from LL formulas to channel types:
  \begin{gather*}
    ⟦ 1 ⟧ := ↑⊥ \qquad
    ⟦ A ⊗ B ⟧ := ↑(⟦A^⊥⟧ ⅋ ⟦B^⊥⟧) \qquad
    ⟦ \oc A ⟧ := ↑\wn⟦A⟧ \\
    ⟦ ⊥ ⟧ := ↓⊥ \qquad
    ⟦ A ⅋ B ⟧ := ↓(⟦A⟧ ⅋ ⟦B⟧) \qquad
    ⟦ \wn A ⟧ := ↓\wn⟦A⟧ \\
    ⟦ x_1:A_1, …, x_n:A_n ⟧ := x_1:⟦A_1⟧ ⊗ … ⊗ x_n:⟦A_n⟧
  \end{gather*}
  If $Γ⊢P::x:A$ is derivable in DCPT then $⟦Γ⟧⊗x:⟦A^⊥⟧⊢P$ holds in our
  system. \\
  If $⟦Γ⟧⊗x:⟦A^⊥⟧⊢P$ holds, then $Γ⊢P'::x:A$ is derivable in DCPT for some
  $P'≡P$.
\end{theorem}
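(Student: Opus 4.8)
The plan is to prove the two implications separately by induction on the typing derivations, following the structure of the translation of processes.

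For the first direction, I would proceed by induction on the derivation of $Γ⊢P::x:A$ in DCPT. Each DCPT rule introduces or eliminates one of the session-type connectives on either the right (the channel $x$) or a left hypothesis in $Γ$. The key observation is that, by the definition of $⟦⋅⟧$, the connective on a right-hand formula $A$ becomes an \emph{output} capability $↑$ carrying the translated dual components, while a left-hand formula $A$ in $Γ$ becomes an \emph{input} capability $↓$ (note $⟦Γ⟧$ uses $⟦A_i⟧$ directly, whereas the succedent uses $⟦A^⊥⟧$, so polarities are coherent). Then, for instance, the DCPT rule sending a channel along $x$ (the $⊗$-right rule) is translated into an output prefix $\outf{x}{vu'}$ guarded process obtained via the process translation $⟦\outf{u}{v}.P⟧ := \new{u'}(\outf{u}{vu'} \para ⟦P⟧[u'/v])$, and I would check that the \rulename{out}, \rulename{para} and \rulename{new} rules of our system, combined with a \rulename{sub} step justified by a small MELL derivation, reconstruct exactly the judgement $⟦Γ⟧⊗x:⟦A^⊥⟧⊢P$. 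The $\oc$/$\wn$ cases use the exponential fragment: a replicated server of type $\oc A$ translates to $\oc x:↑\wn⟦A⟧$-style assignments handled by \rulename{in!} and \rulename{newω}. The cut rule of DCPT corresponds to a name creation $\new{u}$ together with \rulename{para}, which is precisely the shape handled by lemma \ref{lemma:congruence}.

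For the converse, I would argue that the translation $⟦⋅⟧$ on types, read backwards, is well-defined up to the logical isomorphisms built into our subtyping: an environment type of the form $⟦Γ⟧⊗x:⟦A^⊥⟧$ is, up to $≃$, in the image of the translation, and the shape of any typing derivation of it in our system is constrained — by lemma \ref{lemma:subtyping-commutation} we may assume \rulename{sub} rules only appear above \rulename{in} and \rulename{new} and at the root, so the term structure is forced to follow the alternation of prefixes dictated by the capabilities. One then reads off, bottom-up, a DCPT derivation of $Γ⊢P'::x:A$ where $P'$ differs from $P$ only by the insertion/removal of the fresh "continuation" names $u'$ and by structural congruence (rescoping of the $\new{u'}$ introduced by the output translation, and garbage collection of spent linear restrictions); hence $P'≡P$. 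Free outputs with no continuation appear here, matching the asynchronous fragment, and the \rulename{out-async} derived rule is what lines up the translation on the nose.

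The main obstacle I expect is the bookkeeping around the auxiliary names $u'$ introduced by the process translation and the fact that a single session channel is represented over its whole life by a \emph{sequence} of linear channels $u, u', u'', …$, each created fresh at the point of use. Making the induction hypothesis strong enough to track this — essentially an invariant saying "the active occurrence of session channel $u$ at this point of the process corresponds to the currently-live restricted name, with the dangling capability matching the residual session type" — and showing that the rescoping needed to move each $\new{u'}$ to where \rulename{new} can consume it is always legal up to $≡$ (using the side conditions on free names in the congruence for \rulename{new}) is where the real work lies; by contrast, the per-connective case analysis is routine once the correspondence $↑ \leftrightarrow$ "send", $↓ \leftrightarrow$ "receive" and the linear-vs-exponential split are fixed. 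The consistency issue that plagued earlier relaxations does not arise here in the forward direction, but one does rely on the identifications $A⊗B≃A⅋B$ being \emph{absent}: DCPT is intuitionistic linear logic, so no degeneration is needed and the embedding is faithful — which is worth remarking as it contrasts with sections \ref{sec:kpt} and \ref{sec:hyb}.
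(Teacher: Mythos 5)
Your forward direction matches the paper's: a rule-by-rule check that each DCPT rule is derivable through the translation, and that part of your sketch is fine. The reverse direction, however, has a real gap. You propose to invoke lemma \ref{lemma:subtyping-commutation} and then claim that "the term structure is forced to follow the alternation of prefixes dictated by the capabilities." But that lemma only pushes \rulename{sub} rules to sit above \rulename{in}/\rulename{new} rules and at the root; it does not eliminate them, and each residual \rulename{sub} still encapsulates an arbitrary MELL entailment between environment types. Such a derivation is not yet the image of a DCPT derivation, and nothing in your argument converts it into one. The paper's proof relies on a genuinely stronger \emph{standardisation} result: the MELL proofs justifying the \rulename{sub} steps are cut-eliminated, and the resulting derivation is reorganised so that \rulename{sub} disappears entirely, the logical steps being absorbed into permutations of the typing rules. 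The substantive check — which your sketch omits — is that every one of these permutations acts on the process term only up to structural congruence; that is precisely where the $P'≡P$ in the statement comes from.

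Relatedly, you misattribute the source of $P'≡P$: you locate it in the bookkeeping of the continuation names $u'$ introduced by the process translation. But that translation is only the motivation for why a static type assignment suffices; DCPT is already formulated in the asynchronous, continuation-passing π-calculus, so the same term $P$ (with its $u'$'s) appears on both sides of the theorem. The discrepancy between $P$ and $P'$ arises from the rescoping of restrictions and reassociation of parallel compositions performed during standardisation, not from inserting or removing auxiliary names. You also work in the asynchronous fragment via \rulename{out-async}, whereas the paper applies the standardisation to the calculus with internal mobility (bound outputs via \rulename{out-bound} and lemma \ref{lemma:new*}), which is the shape DCPT outputs actually take; this is a smaller point, but it affects which derived rules the case analysis must line up with.
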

\begin{proof}
  The direct implication simply consists in checking that each rule of DCPT
  translates in our system, which is straightforward.
  For the reverse implication, we establish a standardisation result for our
  type system (applied to the π-calculus with internal mobility) which
  essentially eliminates the \rulename{sub} rule by cut elimination; we just
  have to check that all permutations involved are structural congruences.
\end{proof}

\section{Discussion}
\label{sec:discussion}

\subparagraph*{More systems}

Our results are formulated in a π-calculus without choice using MELL as a
subtyping logic.
We chose to present this system since it illustrates the fundamental ideas of
our approach, but it can be naturally extended to a type system for the
π-calculus with choice, more liberal replication,
genericity \cite{berger-2005-genericity} etc using full linear logic, with
additives and second-order quantification.

We also conjecture that it should be possible to embed
systems of a different kind using modalities different from the $\oc$ and
$\wn$ of linear logic.
In particular, type systems that ensure termination by stratification of
names \cite{deng-2004-ensuring} should correspond to using our basic system
but replacing MELL with a form of light logic \cite{girard-1998-light} where
the operations on exponentials are constrained using stratification techniques
that are (at least superficially) similar.

\subparagraph*{Synchrony, or lack thereof}

The lock-freeness property that the system ensures is important but it implies
a serious defect of our system: it is very weak at dealing with prefixing.
A witness of this fact can be seen in the following derivation:
\[
  \begin{prooftree}
    \Hypo{ y:B ⊗ E ⊢ P }
    \Infer1[in]{ v:↓B ⊗ E ⊢ \inb{v}{y}.P }
    \Infer1[sub]{ x:A ⊗ F ⊢ \inb{v}{y}.P }
    \Infer1[in]{ u(x):↓A ⊗ F ⊢ \inb{u}{x}.\inb{v}{y}.P }
  \end{prooftree}
\]
Assuming that the names $u,v,x,y$ are all distinct, it is easy to prove (by
reasoning on the MELL proof of $⊢(x:A)^⊥⅋F^⊥,v:↓B⊗E$) that $F$ can actually be
written $v:↓B⊗F'$ up to associativity and commutativity, and that subsequently
the subtypings $x:A⊗F'≤E$ and hence $x:A⊗y:B⊗F'≤y:B⊗E$ hold.
Therefore the term $\inb{v}{y}.\inb{u}{x}.P$ will also by typeable by the same
type as above.
Hence our types are preserved by the equivalence
\[
  \inb{u}{x}.\inb{v}{y}.P ≃ \inb{v}{y}.\inb{u}{x}.P
\]
The same argument applies to output prefixes and commutation between inputs
and outputs.
A consequence of this observation is that any typed equivalence over processes
must include the rule above, in other words our type system actually tells
about a very asynchronous calculus (this is nearly the calculus of
solos \cite{laneve-1999-solos} with restrictions on scopes, except that
prefixes can freely commute but not interact).

A deep reason for this state of things is that the discipline on names in
process composition stems from proof composition in linear logic, which
fundamentally works by enforcing acyclicity and connectedness in connections
between proofs \cite{danos-1989-structure}, in a \emph{commutative} context.
Indeed, the multiplicative connectives can be interpreted as follows:
\begin{itemize}
\item $E⊗F⊢P$ means that $P$ is expected to behave well in an environment that
  provides some behaviour for $E$ and some behaviour for $F$, and those are
  \emph{independent}.
\item $E⅋F⊢P$ means that $P$ is expected to behave well when these two
  behaviours are \emph{correlated}, {i.e.} some events in $E$ can be prefixed
  by events in $F$ and vice-versa.
\end{itemize}
With only this kind of information, there is no hope to have a type system
that would accept $\inb{a}{}.\inb{b}{}\para\outf{a}{}.\outf{b}{}$ but would
reject $\inb{a}{}.\inb{b}{}\para\outf{b}{}.\outf{a}{}$.
The only way out of this problem is either to extend the logic with
non-commutative connectives, or to introduce other forms of dependencies, for
instance through quantification.

\subparagraph*{Semantics}

This paper does not discuss semantic aspects of logic and processes, however
these are fundamental motivations of our approach.
We claim that the method of starting with a very constrained system and the
relaxing it in a controlled way using logical axioms should be fruitful
in this respect.

Realisability can be used to extract interpretations of formulas and terms
from syntax itself, using orthogonality as a generic form of testing.
It is efficient, in particular, for specifying operational properties of
processes, among which termination and lock-freeness.
Capabilities get interpreted by basic operational definitions while logic is
interpreted as in phase semantics, which justifies our use of entailment as
subtyping since, in such semantics, $E⊢F$ does imply the inclusion of $E$ into
$F$.
Besides, consistency of phase interpretation accepts some axioms (like the mix
rule or arbitrary weakening) but not others, which justifies the effects of
adding those axioms in our subtyping logic.

Another promising direction is the use of denotational semantics of proofs as
a way to build semantics of processes.
Evidence for this can be found, for instance, in the relational model of
linear logic: it is a non-trivial model of proofs, yet it supports the
identification of opposite types, as used in section \ref{sec:kpt} to rebuild
i/o types.
Using an appropriate interpretation for capability types, this should provide
meaningful denotational models for i/o-typed processes.
Besides, the flexibility of the relational model makes it suitable to
interpret differential linear logic, in which it is possible to formulate
encodings of processes of the calculus of solos \cite{ehrhard-2010-acyclic}.
Our approach thus provides new tools for the study of denotational models of
processes.
This could for instance extend a line of work of Varacca and
Yoshida \cite{varacca-2006-typed} interpreting the π-calculus in event
structures using logical constructs.

\bibliography{biblio}

\begin{thebibliography}{10}

\bibitem{beffara-2005-logique}
Emmanuel Beffara.
\newblock {\em Logique, r{\'e}alisabilit{\'e} et concurrence}.
\newblock PhD thesis, Universit{\'e} Paris 7, 2005.

\bibitem{beffara-2006-concurrent}
Emmanuel Beffara.
\newblock A concurrent model for linear logic.
\newblock In {\em 21st International Conference on Mathematical Foundations of
  Programming Semantics ({MFPS})}, volume 155 of {\em Electronic Notes in
  Theoretical Computer Science}, pages 147--168, 2006.

\bibitem{beffara-2014-proof}
Emmanuel Beffara.
\newblock A proof-theoretic view on scheduling in concurrency.
\newblock In Paolo Oliva, editor, {\em Classical Logic and Computation 2014},
  volume 164 of {\em Electronic Proceedings in Theoretical Computer Science},
  pages 78--92, Wien, July 2014.

\bibitem{bellin-1994-calculus}
Gianluigi Bellin and Philip~J. Scott.
\newblock On the \ensuremath{\pi}-calculus and linear logic.
\newblock {\em Theoretical Computer Science}, 135(1):11--65, May 1994.

\bibitem{berger-2001-sequentiality}
Martin Berger, Kohei Honda, and Nobuko Yoshida.
\newblock Sequentiality and the \ensuremath{\pi}-calculus.
\newblock In Samson Abramsky, editor, {\em Typed Lambda Calculi and
  Applications}, volume 2044 of {\em Lecture Notes in Computer Science}, pages
  29--45. Springer Berlin Heidelberg, 2001.

\bibitem{berger-2005-genericity}
Martin Berger, Kohei Honda, and Nobuko Yoshida.
\newblock Genericity and the \ensuremath{\pi}-calculus.
\newblock {\em Acta Informatica}, 42(2-3):83--141, November 2005.

\bibitem{boudol-1992-asynchrony}
G{\'e}rard Boudol.
\newblock Asynchrony and the \ensuremath{\pi}-calculus.
\newblock Research report, {INRIA}, 1992.

\bibitem{caires-2010-session}
Lu{\'\i}s Caires and Frank Pfenning.
\newblock Session types as intuitionistic linear propositions.
\newblock In Paul Gastin and Fran{\c c}ois Laroussinie, editors, {\em
  Proceedings of the 21st International Conference on Concurrency Theory},
  volume 6269 of {\em Lecture Notes in Computer Science}, pages 222--236,
  Paris, France, August 2010. Springer Berlin Heidelberg.

\bibitem{danos-1989-structure}
Vincent Danos and Laurent Regnier.
\newblock The structure of multiplicatives.
\newblock {\em Archive for Mathematical Logic}, 28(3):181--203, October 1989.

\bibitem{deng-2004-ensuring}
Yuxin Deng and Davide Sangiorgi.
\newblock Ensuring termination by typability.
\newblock In {\em Exploring New Frontiers of Theoretical Informatics, {IFIP}
  18th World Computer Congress, {TC}1 3rd International Conference on
  Theoretical Computer Science ({TCS}2004)}, pages 619--632. Kluwer, 2004.

\bibitem{deyoung-2012-cut}
Henry {DeYoung}, Luis Caires, Frank Pfenning, and Bernardo Toninho.
\newblock Cut reduction in linear logic as asynchronous session-typed
  communication.
\newblock In Patrick C{\'e}gielski and Arnaud Durand, editors, {\em Computer
  Science Logic ({CSL}'12) - 26th International Workshop/21st Annual Conference
  of the {EACSL}}, volume~16 of {\em Leibniz International Proceedings in
  Informatics ({LIPIcs})}, pages 228--242, Dagstuhl, Germany, 2012. Schloss
  Dagstuhl--Leibniz-Zentrum fuer Informatik.

\bibitem{ehrhard-2010-acyclic}
Thomas Ehrhard and Olivier Laurent.
\newblock Acyclic solos and differential interaction nets.
\newblock {\em Logical Methods in Computer Science}, 6(3), September 2010.

\bibitem{fu-1997-proof}
Yuxi Fu.
\newblock A proof-theoretical approach to communication.
\newblock In Pierpaolo Degano, Roberto Gorrieri, and Alberto
  Marchetti-Spaccamela, editors, {\em 24th International Colloquium on
  Automata, Languages and Programming ({ICALP})}, volume 1256 of {\em Lecture
  Notes in Computer Science}, pages 325--335. Springer Verlag, 1997.

\bibitem{girard-1987-linear}
Jean-Yves Girard.
\newblock Linear logic.
\newblock {\em Theoretical Computer Science}, 50(1):1--101, 1987.

\bibitem{girard-1998-light}
Jean-Yves Girard.
\newblock Light linear logic.
\newblock {\em Information and Computation}, 143(2):175--204, 1998.

\bibitem{hirschkoff-2013-name}
Daniel Hirschkoff, Jean-Marie Madiot, and Davide Sangiorgi.
\newblock Name-passing calculi: from fusions to preorders and types.
\newblock In {\em Logic in Computer Science ({LICS}), 2013 28th Annual
  {IEEE}/{ACM} Symposium on}, pages 378--387, Los Alamitos, {CA}, {USA}, 2013.
  {IEEE} Computer Society.

\bibitem{honda-1993-types}
Kohei Honda.
\newblock Types for dyadic interaction.
\newblock In Eike Best, editor, {\em {CONCUR}'93}, volume 715 of {\em Lecture
  Notes in Computer Science}, pages 509--523. Springer Berlin Heidelberg, 1993.

\bibitem{honda-2010-exact}
Kohei Honda and Olivier Laurent.
\newblock An exact correspondence between a typed \ensuremath{\pi}-calculus and
  polarised proof-nets.
\newblock {\em Theoretical Computer Science},
  411(22{\textendash}24):2223--2238, 2010.

\bibitem{honda-2004-control}
Kohei Honda, Nobuko Yoshida, and Martin Berger.
\newblock Control in the \ensuremath{\pi}-calculus.
\newblock In {\em 4th {ACM}-{SIGPLAN} Continuation Workshop}, 2004.

\bibitem{honda-2008-multiparty}
Kohei Honda, Nobuko Yoshida, and Marco Carbone.
\newblock Multiparty asynchronous session types.
\newblock In {\em Proceedings of the 35th Annual {ACM} {SIGPLAN}-{SIGACT}
  Symposium on Principles of Programming Languages}, {POPL} '08, pages
  273--284, New York, {NY}, {USA}, 2008. {ACM}.

\bibitem{kobayashi-2002-type}
Naoki Kobayashi.
\newblock A type system for lock-free processes.
\newblock {\em Information and Computation}, 177(2):122--159, September 2002.

\bibitem{kobayashi-1999-linearity}
Naoki Kobayashi, Benjamin~C. Pierce, and David~N. Turner.
\newblock Linearity and the pi-calculus.
\newblock {\em {ACM} Transactions on Programming Languages and Systems},
  21(5):914--947, September 1999.

\bibitem{laneve-1999-solos}
Cosimo Laneve and Bj{\"o}rn Victor.
\newblock Solos in concert.
\newblock In Ji{\v r}{\'\i} Wiederman, Peter van~Emde Boas, and Mogens Nielsen,
  editors, {\em 26th International Colloquium on Automata, Languages and
  Programming ({ICALP})}, volume 1644, pages 513--523. Springer Verlag, 1999.

\bibitem{parrow-1998-fusion}
Joachim Parrow and Bj{\"o}rn Victor.
\newblock The fusion calculus: Expressiveness and symmetry in mobile processes.
\newblock In {\em 13th {IEEE} Symposium on Logic in Computer Science ({LICS})},
  pages 176--185, 1998.

\bibitem{pierce-1993-typing}
Benjamin~C. Pierce and Davide Sangiorgi.
\newblock Typing and subtyping for mobile processes.
\newblock In {\em Proceedings of Eighth Annual {IEEE} Symposium on Logic in
  Computer Science, 1993. {LICS} '93}, pages 376--385, 1993.

\bibitem{sangiorgi-1996-calculus}
Davide Sangiorgi.
\newblock \ensuremath{\pi}-calculus, internal mobility and agent-passing
  calculi.
\newblock {\em Theoretical Computer Science}, 167(1{\textendash}2):235--274,
  1996.

\bibitem{varacca-2006-typed}
Daniele Varacca and Nobuko Yoshida.
\newblock Typed event structures and the \ensuremath{\pi}-calculus.
\newblock {\em Electronic Notes in Theoretical Computer Science}, 158:373--397,
  2006.

\bibitem{yoshida-2001-strong}
Nobuko Yoshida, Martin Berger, and Kohei Honda.
\newblock Strong normalisation in the \ensuremath{\pi}-calculus.
\newblock In {\em 16th Annual {IEEE} Symposium on Logic in Computer Science,
  2001.}, pages 311--322, 2001.

\end{thebibliography}

\appendix
\section{Technical appendix}

\subsection{Interpolation lemma (lemma \ref{lemma:interpolation})}
\label{app:interpolation}

\begin{proof}
  We reason by induction on a cut-free proof $π$ of $⊢Γ,Δ$.
  \begin{itemize}
  \item If $π$ is an axiom rule, then three cases may occur:
    \begin{itemize}
    \item either $Γ$ and $Δ$ are equal and are a single formula, then $F:=Δ$
      works,
    \item or $Γ=A^⊥,A$ for some $A$ and $Δ$ is empty, then $F:=⊥$ works,
    \item or $Γ$ is empty and $Δ=A^⊥,A$ for some $A$, then $F:=1$ works.
    \end{itemize}
  \item If $π$ is a $1$ rule then either $Γ=∅$ and $Δ=1$ or $Γ=1$ and $Δ=∅$.
    In either case, $Γ^⊥,Δ$ is a singleton $\{F\}$ where $F$ provides the
    expected result.
  \item It $π$ ends with a $⊥$ rule, it has the shape
    \[
      \begin{prooftree}
        \Hypo{ &⊢ Γ', Δ' }
        \Infer1[$⊥$]{ &⊢ Γ', Δ', ⊥ }
      \end{prooftree}
      \qquad\text{with}\qquad
      \begin{aligned}
        & Γ = Γ', \quad Δ = ⊥, Δ' &&\text{or} \\
        & Γ = Γ', ⊥, \quad Δ = Δ'
      \end{aligned}
    \]
    We can apply the induction hypothesis on $⊢Γ',Δ'$, yielding proofs of
    $⊢Γ',F$ and $⊢F^⊥,Δ'$, and conclude by adding a $⊥$ rule on the
    appropriate side.
  \item If $π$ ends with a $⊗$ rule, it has the shape
    \[
      \begin{prooftree}
        \Hypo{ ⊢ Γ_1, A_1, Δ_1 }
        \Hypo{ ⊢ Γ_2, A_2, Δ_2 }
        \Infer2[$⊗$]{ ⊢ Γ_1, Γ_2, A_1⊗A_2, Δ_1, Δ_2 }
      \end{prooftree}
      \qquad\text{with}\qquad
      \begin{aligned}
        & Γ = Γ_1, Γ_2, \quad Δ = A_1⊗A_2, Δ_1, Δ_2 &&\text{or} \\
        & Γ = Γ_1, Γ_2, A_1⊗A_2, \quad Δ = Δ_1, Δ_2
      \end{aligned}
    \]
    In the first case, we proceed as follows using the induction hypothesis on
    $Γ_i$ and $Δ_i,A_i$ for each $i$:
    \[
      \begin{prooftree}
        \Infer0[IH]{ ⊢ Γ_1, F_1 }
        \Infer0[IH]{ ⊢ Γ_2, F_2 }
        \Infer2[$⊗$]{ ⊢ Γ_1, Γ_2, F_1⊗F_2 }
      \end{prooftree}
      \qquad
      \begin{prooftree}
        \Infer0[IH]{ ⊢ F_1^⊥, A_1, Δ_1 }
        \Infer0[IH]{ ⊢ F_2^⊥, A_2, Δ_2 }
        \Infer2[$⊗$]{ ⊢ F_1^⊥, F_2^⊥, A_1⊗A_2, Δ_1, Δ_2 }
        \Infer1[$⅋$]{ ⊢ F_1^⊥⅋F_2^⊥, A_1⊗A_2, Δ_1, Δ_2 }
      \end{prooftree}
    \]
    These proofs provide the expected conclusions, with $F:=F_1⊗F_2$.
    As for the constraints on literals, the induction hypothesis gives us that
    the atoms in each $F_i$ are present both in $Γ_i^⊥$ and in $Δ_i,A_i$, hence
    the atoms in $F$ are present both in $Γ^⊥$ and in $Δ$.
    The second case is similar except that we get $F:=F_1⅋F_2$.
  \item If $π$ ends with a $⅋$ rule, it has the shape
    \[
      \begin{prooftree}
        \Hypo{ &⊢ Γ', A, B, Δ' }
        \Infer1[$⅋$]{ &⊢ Γ', A⅋B, Δ' }
      \end{prooftree}
      \qquad\text{with}\qquad
      \begin{aligned}
        & Γ = Γ', \quad Δ = A⅋B, Δ' &&\text{or} \\
        & Γ = Γ', A⅋B, \quad Δ = Δ'
      \end{aligned}
    \]
    In the first case we get a formula $F$ and proofs of $⊢Γ',F$ and
    $⊢F^⊥,A,B,Δ'$ by induction hypothesis, and from the second one we
    immediately deduce a proof of $⊢F^⊥,A⅋B,Δ'$, so the same $F$ is appropriate.
    The second case is similar.
    The constraint on atoms is immediately satisfied.
  \item If $π$ ends with a dereliction, weakening or contraction rule, we get
    the expected formula immediately by induction hypothesis on the premiss.
  \item If $π$ ends with a promotion, it has the shape
    \[
      \begin{prooftree}
        \Hypo{ &⊢ \wn Γ', A, \wn Δ' }
        \Infer1[$\oc$]{ &⊢ \wn Γ', \oc A, \wn Δ' }
      \end{prooftree}
      \qquad\text{with}\qquad
      \begin{aligned}
        & Γ = \wn Γ', \quad Δ = \oc A, \wn Δ' &&\text{or} \\
        & Γ = \wn Γ', \oc A, \quad Δ = \wn Δ'
      \end{aligned}
    \]
    In the first case we get a formula $F$ and proofs of $⊢\wn Γ,F$ and
    $⊢F^⊥,A,\wn Δ'$, then by dereliction and promotion we get $⊢\wn Γ,\oc F$ and
    $⊢\wn F^⊥,\oc A,\wn Δ'$ (promotion on $A$) so $\oc F$ is appropriate.
    In the second case, similarly, we get $\wn F$ as the intermediate formula.
    \qedhere
  \end{itemize}
\end{proof}

\subsection{Typing and structural congruence (lemma \ref{lemma:congruence})}
\label{app:congruence}

\begin{proof}
  Thanks to lemma \ref{lemma:subtyping-commutation}, it is enough to
  consider typing derivations where \rulename{sub} rules only occur right above \rulename{new}
  rules (since no structural congruence rule involves inputs).

  For associativity, commutativity and neutrality in parallel composition, the
  associated properties for $⅋$ and $⊥$ are easily provable in multiplicative
  linear logic.

  For scope extrusion, consider a typed term $P\para\newk{x}{A}{k}Q$ where $x$
  does not occur in $P$.
  The typing derivation has the following shape:
  \begin{prooftree*}
    \Hypo{ E ⊢ P }
    \Hypo{ [x]_A^k ⊗ F &⊢ Q }
    \Infer1[new$k$]{ F &⊢ \newk{x}{A}{k}Q }
    \Infer2[para]{ E ⅋ F ⊢ P \para \newk{x}{A}{k}Q }
  \end{prooftree*}
  where $x$ does not occur in $E$ (since it does not occur in $P$) nor in $F$
  (by the side condition in \rulename{new$k$}).
  Then we can write the following derivation:
  \begin{prooftree*}
    \Hypo{ E ⊢ P }
    \Hypo{ [x]_A^k ⊗ F ⊢ Q }
    \Infer2[para]{ E ⅋ ([x]_A^k ⊗ F) ⊢ P \para Q }
    \Infer1[sub]{ [x]_A^k ⊗ (E ⅋ F) ⊢ P \para Q }
    \Infer1[new]{ E ⅋ F ⊢ \newk{x}{A}{k}(P \para Q) }
  \end{prooftree*}
  where the subtyping is easily proved in MLL.
  For the reverse rule, the typing of a term $\newk{x}{A}{k}(P\para Q)$ has the
  following shape:
  \begin{prooftree*}
    \Hypo{ E ⊢ P }
    \Hypo{ F ⊢ Q }
    \Infer2[para]{ E ⅋ F ⊢ P \para Q }
    \Infer1[sub]{ [x]_A^k ⊗ G &⊢ P \para Q }
    \Infer1[new]{ G &⊢ \newk{x}{A}{k}(P \para Q) }
  \end{prooftree*}
  The subtyping judgement is a proof in MELL of $⊢([x]_A^k)^⊥⅋G^⊥,E⅋F$, which
  is equivalent to $⊢([x]_A^k)^⊥,G^⊥,E,F$.
  By lemma \ref{lemma:interpolation}, we can deduce that there exists a MELL
  formula $H$ such that $⊢G^⊥,E,H$ and $⊢H^⊥,([x]_A^k)^⊥,F$ are provable and
  the literals in $H$ occur both in $G,E^⊥$ and in $([x]_A^k)^⊥,F$.
  By hypothesis $x$ does not occur in $P$ so it does not occur in $E$, by the
  side-condition on \rulename{new$k$} it does not occur in $G$ either,
  therefore $x$ does not occur in $H$.
  Therefore the literals in $H$ occur in $F$, which proves that $H$ only has
  positive literals, so it is an environment type.
  The proofs of $⊢G^⊥,E,H$ and $⊢H^⊥,([x]_A^k)^⊥,F$ induce subtypings $G≤E⅋H$
  and $[x]_A^k⊗H≤F$ so we can conclude this case by the following typing:
  \begin{prooftree*}
    \Hypo{ E ⊢ P }
    \Hypo{ F &⊢ Q }
    \Infer1[sub]{ [x]_A^k ⊗ H &⊢ Q }
    \Infer1[new]{ H &⊢ \newk{x}{A}{k}Q }
    \Infer2[para]{ E ⅋ H ⊢ P \para \newk{x}{A}{k}Q }
    \Infer1[sub]{ G ⊢ P \para \newk{x}{A}{k}Q }
  \end{prooftree*}

  For commutation of restrictions, a typed term $\newk{x}{A}{k}\newk{y}{B}{ℓ}P$ must have a
  derivation of the following shape:
  \begin{prooftree*}
    \Hypo{ [y]_B^ℓ ⊗ E &⊢ P }
    \Infer1[new]{ E &⊢ \newk{y}{B}{ℓ}P }
    \Infer1[sub]{ [x]_A^k ⊗ F &⊢ \newk{y}{B}{ℓ}P }
    \Infer1[new]{ F &⊢ \newk{x}{A}{k}\newk{y}{B}{ℓ}P }
  \end{prooftree*}
  From $[x]_A^k⊗F≤E$ we deduce $[x]_A^k⊗[y]_B^ℓ⊗F≤[y]_B^ℓ⊗E$, so we
  have the following typing:
  \begin{prooftree*}
    \Hypo{ [y]_B^ℓ ⊗ E &⊢ P }
    \Infer1[sub]{ [x]_A^k ⊗ [y]_B^ℓ ⊗ F &⊢ P }
    \Infer1[new]{ [y]_B^ℓ ⊗ F &⊢ \newk{x}{A}{k}P }
    \Infer1[new]{ F &⊢ \newk{y}{B}{ℓ}\newk{x}{A}{k}P }
  \end{prooftree*}
  which validates the case of commutation.
\end{proof}

\subsection{Subject reduction (theorem \ref{thm:subject-reduction})}
\label{app:subject-reduction}

\begin{proof}
  Thanks to lemma \ref{lemma:congruence}, we can reason up to structural
  congruence.
  For an interaction step between linear actions, we have
  $\newl{u}{A}(\outf{u}{¤v}.P\para\inb{u}{¤x}.Q)→P\para Q[¤v/¤x]$.
  The left-hand side is typed as follows:
  \begin{prooftree*}
    \Hypo{ E &⊢ P }
    \Infer1[out]{ u:↑A ⊗ (¤v:A ⅋ E) &⊢ \outf{u}{¤v}.P }
    \Hypo{ ¤x:A ⊗ F &⊢ Q }
    \Infer1[in]{ u:↓A ⊗ F &⊢ \inb{u}{¤x}.Q }
    \Infer2[para]{ (u:↑A⊗(¤v:A⅋E))⅋(u:↓A⊗F)
      ⊢ \outf{u}{¤v}.P\para\inb{u}{¤x}.Q }
    \Infer1[sub]{ [u]_A^1 ⊗ H ⊢ \outf{u}{¤v}.P\para\inb{u}{¤x}.Q }
    \Infer1[new1]{ H ⊢ \newl{u}{A}(\outf{u}{¤v}.P\para\inb{u}{¤x}.Q) }
  \end{prooftree*}
  with the hypothesis that no name in $¤x$ occurs in $F$.
  The natural typing for the reduct is obtained as follows:
  \begin{prooftree*}
    \Hypo{ E ⊢ P }
    \Hypo{ ¤v:A ⊗ F ⊢ Q[¤v/¤x] }
    \Infer2[para]{ E ⅋ (¤v:A ⊗ F) ⊢ P\para Q[¤v/¤x] }
  \end{prooftree*}
  By lemma \ref{lemma:mell-subst}, in the subtyping
  $[u]_A^1⊗H≤(u:↑A⊗(¤v:A⅋E))⅋(u:↓A⊗F)$ we can replace the atomic formula
  $u:↓A$ by $¤v:A$ and the atomic formula $u:↑A$ by $(¤v:A)^⊥$, then we get a
  proof of
  \[
    ⊢ ((¤v:A)^⊥ ⊗ ¤v:A) ⅋ H^⊥, ((¤v:A)^⊥⊗ (¤v:A ⅋ E)) ⅋ (¤v:A ⊗ F)
  \]
  The sequents
  \begin{gather*}
    ⊢ H^⊥, (¤v:A⅋(¤v:A)^⊥)⊗H \\
    ⊢ (¤v:A ⅋ ((¤v:A)^⊥ ⊗ E^⊥)) ⊗ ((¤v:A)^⊥ ⅋ F^⊥), E ⅋ (¤v:A ⊗ F)
  \end{gather*}
  are easily provable in MLL so by the cut rule we get a proof of
  $⊢H^⊥,(¤v:A⊗F)⅋E$ by which we can conclude with the typing of the reduct:
  \begin{prooftree*}
    \Hypo{ E ⊢ P }
    \Hypo{ ¤v:A ⊗ F ⊢ Q[¤v/¤x] }
    \Infer2[para]{ E ⅋ (¤v:A ⊗ F) ⊢ P\para Q[¤v/¤x] }
    \Infer1[sub]{ H ⊢ P\para Q[¤v/¤x] }
  \end{prooftree*}

  For an interaction step involving a replicated input, we have
  \[
    \newr{u}{A}(\outf{u}{¤v}.P\para\inr{u}{¤x}.Q\para R)
    → \newr{u}{A}(P\para Q[¤v/¤x]\para\inr{u}{¤x}.Q\para R) .
  \]
  The left-hand side is typed as follows:
  \begin{prooftree*}
    \Hypo{ E &⊢ P }
    \Infer1[out]{ u:↑A ⊗ (¤v:A ⅋ E) &⊢ \outf{u}{¤v}.P }
    \Hypo{ ¤x:A ⊗ F^! &⊢ Q }
    \Infer1[in!]{ \wn u:↓A ⊗ F^! &⊢ \inr{u}{¤x}.Q }
    \Hypo{ G ⊢ R }
    \Infer3[para]{ (u:↑A ⊗ (¤v:A ⅋ E)) ⅋ (\wn u:↓A ⊗ F^!) ⅋ G
      &⊢ \outf{u}{¤v}.P \para \inr{u}{¤x}.Q \para R }
    \Infer1[sub]{ [u]_A^ω ⊗ H &⊢ \outf{u}{¤v}.P \para \inr{u}{¤x}.Q \para R }
    \Infer1[newω]{ H &⊢ \newr{u}{A}(\outf{u}{¤v}.P \para \inr{u}{¤x}.Q \para R) }
  \end{prooftree*}
  Then we can deduce the following typing for the reduct without $\newr{u}{A}$:
  \begin{prooftree*}
    \Hypo{ E ⊢ P }
    \Hypo{ ¤v:A ⊗ F^! ⊢ Q[¤v/¤x] }
    \Hypo{ ¤x:A ⊗ F^! &⊢ Q }
    \Infer1[rep]{ \wn u:↓A ⊗ F^! &⊢ \inr{u}{¤x}.Q }
    \Hypo{ G ⊢ R }
    \Infer4[para]{ E ⅋ (¤v:A ⊗ F^!) ⅋ (\wn u:↓A ⊗ F^!) ⅋ G
      ⊢ P \para Q[¤v/¤x] \para \inr{u}{¤x}.Q \para R }
  \end{prooftree*}
  The instance of \rulename{sub} in the first typing uses a proof of
  \[
    ⊢ (\wn(u:↑A)^⊥ ⊗ \oc(u:↓A)^⊥) ⅋ H^⊥,
    (u:↑A ⊗ (¤v:A ⅋ E)) ⅋ (\wn u:↓A ⊗ F^!) ⅋ G
  \]
  Let $O:=u:↑A$ (output), $I:=u:↓A$ (input) and $V:=¤v:A$ (value).
  The sequent above is
  \[
    ⊢ (\wn O^⊥ ⊗ \oc I^⊥) ⅋ H^⊥,
    (O ⊗ (V ⅋ E)) ⅋ (\wn I ⊗ F^!) ⅋ G
  \]
  Because of the side condition in the rule \rulename{newω}
  the name $u$ does not occur in $H$ so there is no other occurrence of the
  literal $O^⊥$ in the above sequent, hence the linear atom $O$ can
  only be introduced as follows, up to permutations of rules:
  \begin{prooftree*}
    \Infer0[ax]{ ⊢ O^⊥, O }
    \Infer1[$\wn$]{ ⊢ \wn O^⊥, O }
  \end{prooftree*}
  If we replace this with
  \begin{prooftree*}
    \Infer0[ax]{ ⊢ V^⊥, V }
    \Infer1[w]{ ⊢ \wn O^⊥, V^⊥, V }
  \end{prooftree*}
  and we introduce a $⅋$ between $\wn O^⊥$ and $V^⊥$ just before
  $\wn O^⊥$ is involved in a tensor rule, we replace $O$ with $V$ in the
  proof above and we get a proof of
  \[
    ⊢ ((\wn O^⊥ ⅋ V^⊥) ⊗ \oc I^⊥) ⅋ H^⊥,
    (V ⊗ (V ⅋ E)) ⅋ (\wn I ⊗ F^!) ⅋ G
  \]
  The subformula $\wn I$ is necessarily introduced by a (possibly
  $η$-expanded) axiom rule that introduces $\oc I^⊥$, besides the latter
  only occurs once so $\wn I$ is only introduced once and thus is not
  involved in any contraction (except possibly with formulas introduced by
  weakening, but this case can be eliminated), so if we replace this axiom by
  an axiom on any formula and get another valid proof.
  Using the formula $V^⊥⅋\wn I$ we get
  \[
    ⊢ ((\wn O^⊥ ⅋ V^⊥) ⊗ (V ⊗ \oc I^⊥)) ⅋ H^⊥,
    (V ⊗ (V ⅋ E)) ⅋ ((V^⊥ ⅋ \wn I) ⊗ F^!) ⅋ G
  \]
  Composing this with the following proofs:
  \begin{prooftree*}
    \Infer0[ax]{ ⊢ \wn O^⊥, \oc O }
    \Infer0[ax]{ ⊢ V, V^⊥ }
    \Infer2[$⊗$]{ ⊢ \wn O^⊥, \oc O ⊗ V, V^⊥ }
    \Infer0[ax]{ ⊢ \oc I^⊥, \wn I }
    \Infer2[$⊗$]{ ⊢ \wn O^⊥ ⊗ \oc I^⊥,
      \oc O ⊗ V, V^⊥, \wn I }
    \Infer[double]1[$⅋$]{ ⊢ \wn O^⊥ ⊗ \oc I^⊥,
      (\oc O ⊗ V) ⅋ (V^⊥ ⅋ \wn I) }
    \Infer0[ax]{ ⊢ H^⊥, H }
    \Infer2[$⊗$]{ ⊢ \wn O^⊥ ⊗ \oc I^⊥, H^⊥,
      ((\oc O ⊗ V) ⅋ (V^⊥ ⅋ \wn I)) ⊗ H }
    \Infer1[$⅋$]{ ⊢ (\wn O^⊥ ⊗ \oc I^⊥) ⅋ H^⊥,
      ((\oc O ⊗ V) ⅋ (V^⊥ ⅋ \wn I)) ⊗ H }
  \end{prooftree*}
  and
  \begin{prooftree*}
    \Infer0[ax,ax,$⊗$]{ ⊢ V^⊥, V ⊗ E^⊥, E }
    \Infer1[$⅋$]{ ⊢ V^⊥ ⅋ (V ⊗ E^⊥), E }
    \Infer0[ax]{ ⊢ V^⊥, V }
    \Infer0[ax]{ ⊢ (F^!)^⊥, F^! }
    \Infer2[$⊗$]{ ⊢ V^⊥, (F^!)^⊥, V ⊗ F^! }
    \Infer0[ax]{ ⊢ \oc I^⊥, \wn I }
    \Infer0[ax]{ ⊢ (F^!)^⊥, F^! }
    \Infer2[$⊗$]{ ⊢ \oc I^⊥, (F^!)^⊥, \wn I ⊗ F^! }
    \Infer2[$⊗$]{ ⊢ V^⊥ ⊗ \oc I^⊥, (F^!)^⊥, (F^!)^⊥, V ⊗ F^!, \wn I ⊗ F^! }
    \Infer1[c]{ ⊢ V^⊥ ⊗ \oc I^⊥, (F^!)^⊥, V ⊗ F^!, \wn I ⊗ F^! }
    \Infer1[$⅋$]{ ⊢ (V^⊥ ⊗ \oc I^⊥) ⅋ (F^!)^⊥, V ⊗ F^!, \wn I ⊗ F^! }
    \Infer[separation=-4em]2[$⊗$]{ ⊢
      (V^⊥ ⅋ (V ⊗ E^⊥)) ⊗ ((V^⊥ ⊗ \oc I^⊥) ⅋ (F^!)^⊥),
      E, V ⊗ F^!, \wn I ⊗ F^! }
    \Infer0[ax]{ ⊢ G^⊥, G }
    \Infer[separation=-4em]2[$⊗$]{ ⊢
      (V^⊥ ⅋ (V ⊗ E^⊥)) ⊗ ((V^⊥ ⊗ \oc I^⊥) ⅋ (F^!)^⊥) ⊗ G^⊥,
      E, V ⊗ F^!, \wn I ⊗ F^!, G }
    \Infer[double]1[$⅋$]{ ⊢
      (V^⊥ ⅋ (V ⊗ E^⊥)) ⊗ ((V^⊥ ⊗ \oc I^⊥) ⅋ (F^!)^⊥) ⊗ G^⊥,
      E ⅋ (V ⊗ F^!) ⅋ (\wn I ⊗ F^!) ⅋ G }
  \end{prooftree*}
  we get
  \[
    ⊢ (\wn O^⊥ ⊗ \oc I^⊥) ⅋ H^⊥, E ⅋ (V ⊗ F^!) ⅋ (\wn I ⊗ F^!) ⅋ G
  \]
  that is
  \[
    ⊢ (\wn (u:↑A)^⊥ ⊗ \oc(u:↓A)^⊥) ⅋ H^⊥,
    E ⅋ (¤v:A ⊗ F^!) ⅋ (\wn(u:↓A) ⊗ F^!) ⅋ G
  \]
  hence we have
  \begin{prooftree*}
    \Hypo{ E ⊢ P }
    \Hypo{ ¤v:A ⊗ F^! ⊢ Q[¤v/¤x] }
    \Hypo{ ¤x:A ⊗ F^! &⊢ Q }
    \Infer1[rep]{ \wn u:↓A ⊗ F^! &⊢ \inr{u}{¤x}.Q }
    \Hypo{ G ⊢ R }
    \Infer4[para]{ E ⅋ (¤v:A ⊗ F^!) ⅋ (\wn u:↓A ⊗ F^!) ⅋ G
      ⊢ P \para Q[¤v/¤x] \para \inr{u}{¤x}.Q \para R }
    \Infer1[sub]{ (\wn (u:↑A) ⅋ \oc(u:↓A)) ⊗ H
      ⊢ P \para Q[¤v/¤x] \para \inr{u}{¤x}.Q \para R }
    \Infer1[newω]{ H
      ⊢ \newr{u}{A}(P \para Q[¤v/¤x] \para \inr{u}{¤x}.Q \para R) }
  \end{prooftree*}
  which concludes the proof.
\end{proof}

\end{document}